\newtheorem{theorem}{Theorem}
\newtheorem{lemma}[theorem]{Lemma}
\newtheorem*{remark}{Remark}
\newtheorem{prop}{Proposition}
\newcounter{tempEquationCounter}
\newcounter{thisEquationNumber}
\DeclareMathOperator{\E}{\mathbb{E}}
\newcommand{\vast}{\bBigg@{4}}
\newcommand{\Vast}{\bBigg@{5}}
\newcounter{relctr} %% <- counter for relations
\everydisplay\expandafter{\the\everydisplay\setcounter{relctr}{0}} %% <- reset every eq
\begin{document}

\title{Effective Capacity Analysis of HARQ-enabled D2D Communication in Multi-Tier Cellular Networks}

\author{
\IEEEauthorblockN{Syed Waqas Haider Shah, \textit{Student Member, IEEE}, M. Mahboob Ur Rahman, \textit{Member, IEEE}, Adnan Noor Mian, \textit{Member, IEEE}, Octavia A. Dobre, \textit{Fellow, IEEE}, and Jon Crowcroft, \textit{Fellow, IEEE}}
}

\maketitle

\long\def\symbolfootnote[#1]#2{\begingroup%
\def\thefootnote{\fnsymbol{footnote}}\footnote[#1]{#2}\endgroup}
\symbolfootnote[0]{\hrulefill \\
Copyright (c) 2015 IEEE. Personal use of this material is permitted. However, permission to use this material for any other purposes must be obtained from the IEEE by sending a request to pubs-permissions@ieee.org.

Syed Waqas Haider Shah and Jon Crowcroft are with the Computer Lab, University of Cambridge, 15 JJ Thomson Avenue, Cambridge, UK CB3 0FD (\{sw920, jon.crowcroft\}@cl.cam.ac.uk). Syed Waqas Haider Shah is also with the Electrical Engineering Department, Information Technology University, Lahore 54000, Pakistan (waqas.haider@itu.edu.pk).

Muhammad Mahboob Ur Rahman and Adnan Noor Mian are with the Electrical Engineering Department, Information Technology University, Lahore 54000, Pakistan (\{mahboob.rahman, adnan.noor\}@itu.edu.pk

Octavia A. Dobre is with the Department of Electrical and Computer Engineering, Memorial University, St. John's, NL A1B 3X5, Canada (odobre@mun.ca)}

\maketitle

\begin{abstract}
This work does the statistical quality-of-service (QoS) analysis of a block-fading device-to-device (D2D) link in a multi-tier cellular network that consists of a macro-BS ($BS_{_{MC}}$) and a micro-BS ($BS_{_{mC}}$) which both operate in full-duplex (FD) mode. For the D2D link under consideration, we first formulate the mode selection problem---whereby D2D pair could either communicate directly, or, through the $BS_{_{mC}}$, or, through the $BS_{_{MC}}$---as a ternary hypothesis testing problem. Next, to compute the {\it effective capacity} (EC) for the given D2D link, we assume that the channel state information (CSI) is not available at the transmit D2D node, and hence, it transmits at a fixed rate $r$ with a fixed power. This allows us to model the D2D link as a Markov system with six-states. We consider both overlay and underlay modes for the D2D link. Moreover, to improve the throughput of the D2D link, we assume that the D2D pair utilizes two special automatic repeat request (ARQ) schemes, i.e., Hybrid-ARQ (HARQ) and truncated HARQ. Furthermore, we consider two distinct queue models at the transmit D2D node, based upon how it responds to the decoding failure at the receive D2D node.
%Under the first queue model, the transmit D2D node responds to a negative acknowledgment (NACK) by lowering the packet's priority with the possibility of transmitting it later. On the other hand, under the second queue model, the transmit D2D node responds to a NACK by removing the packet from the buffer.
Eventually, we provide closed-form expressions for the EC for both HARQ-enabled D2D link and truncated HARQ-enabled D2D link, under both queue models. Noting that the EC looks like a quasi-concave function of $r$, we further maximize the EC by searching for an optimal rate via the gradient-descent method. Simulation results provide us the following insights: i) EC decreases with an increase in the QoS exponent, ii) EC of the D2D link improves when HARQ is employed, iii) EC increases with an increase in the quality of self-interference cancellation techniques used at $BS_{_{mC}}$ and $BS_{_{MC}}$ in FD mode.
%iv) EC due to the first queue model is greater than the EC due to the second queue model.
\end{abstract}

\IEEEpeerreviewmaketitle

\begin{IEEEkeywords}
Effective capacity, D2D communication, retransmission, automatic repeat request, hybrid-ARQ, quality-of-service.
\end{IEEEkeywords}

\section{Introduction}
In wireless communication, reliability is considered one of the key performance indicators for data transmission. It becomes more critical with the emergence of mission-critical and delay-sensitive communication paradigms and their potential applications in society, such as video streaming, online gaming, and augmented reality, etc. These communication paradigms strive to accommodate services with ultra-reliable and low latency requirements. The quality of the wireless channel, which is defined by shadowing, multi-path fading, and inter-user and inter-channel interference, affects the achievable reliability. Prior knowledge of the channel conditions at the transmitter plays an important role in achieving the required reliability. When the transmitter has the perfect channel state information (CSI) before the transmission, it adjusts its transmission power and the transmission rate according to the channel conditions. This way, the optimal performance of the channel can be achieved \cite{gross2012scheduling}. However, in practice, perfect knowledge of the CSI at the transmitter is hard to acquire due to rapidly changing wireless channel conditions (slow and fast fading). Therefore, in practical wireless systems, block fading channel models are used. In these models, pilot bits are transmitted at the start of each fading/time block to approximate the fading process of the channel. This fading process is supposed to remain the same for the entire fading/time block. However, if the block length is long or the pathloss changes rapidly, this approximation does not truly represent the entire fading/time block.

Device-to-device (D2D) communication, on the other hand, is a type of communication with opportunistic channel allocation \cite{tang2017ac}. In this type, a D2D device transmits data in either a direct-D2D mode, using overlay (orthogonal channel allocation) and underlay (non-orthogonal channel allocation) settings, or in a cellular mode (relaying through the base station) \cite{zhao2018caching}. The opportunistic nature of channel allocation in the D2D communication paradigm makes it hard (or not even feasible sometimes) to acquire CSI at a D2D transmitting device \cite{liu2018transceiver,liu2015outage}. Data transmission without prior knowledge of CSI at the transmitting device leads to an increase in the packet drop ratio due to rapidly changing channel conditions. To this end, multiple techniques can be used to ensure reliability, such as shortening the length of the time/fading block (to allow more retransmissions) or reducing the packet size. In particular, automatic repeat request (ARQ) and hybrid-ARQ (HARQ) schemes were proposed to enhance the reliability of the communication channel when CSI is not available at the transmitter prior to the transmission.

In the ARQ retransmission scheme, parity bits are added to the transmitted packets for error detection (ED) at the receiver. If the receiver detects an error, it sends a negative-acknowledgment (NACK) using an error-free feedback link; then, the transmitter retransmits the packet. Retransmission of the same packet continues until the transmitter receives a positive ACK. One of the major drawbacks of the ARQ scheme is that the throughput does not remain constant; instead, it falls rapidly as the channel conditions deteriorate (due to high retransmission frequency). To enhance the performance of the ARQ and to reduce the retransmission frequency, another scheme was introduced. This scheme, known as HARQ, uses forward error correction (FEC) codes, along with ARQ \cite{kim2008optimal}. HARQ is generally used in two settings, namely type-I HARQ and type-II HARQ. In the former, packets are encoded with ED and FEC codes before the transmission, and the receiver tries to remove the error using these codes when an erroneous packet is received (instead of sending NACK right away). Retransmission of the same packet is only requested when the receiver fails to decode the received packet. In the latter, the transmitter first sends data and ED codes only. If the receiver fails to decode the received packet, it sends a NACK, and then the transmitter sends ED and FEC codes in the second transmission attempt. If the packet still has an error, the receiver combines the information received in both transmissions for error correction \cite{burich2017cross}. This phenomenon is known as chase combining or soft combining. The transmitter keeps sending the same parity bits (ED and FEC codes) in each retransmission attempt. If the transmitter sends different parity bits every time it receives a NACK, it is known as type-III HARQ (also known as incremental redundancy) \cite{yafeng2003performance}. HARQ overall provides better performance in terms of throughput and reliability when compared to ARQ. In the case of D2D communication, HARQ can be used in both direct-D2D and cellular-D2D modes. In the direct-D2D mode, a D2D receiver sends ACK/NACK directly to the transmitter in either overlay or underlay settings depending upon the allocated channel. In the cellular-mode, the D2D receiver first sends ACK/NACK to the base station (BS), which the BS then relays to the D2D transmitter. The cellular mode allows HARQ to reuse the existing downlink and uplink channels with minimal changes, at the cost of additional overhead and possibly a longer delay in feedback. Throughput analysis is one of the most common tools used to measure the performance of the retransmission schemes. However, for D2D communication or other delay-sensitive wireless applications, throughput analysis may not provide the required delay guarantees. Moreover, throughput varies with varying channel conditions and drops quickly when the channel conditions deteriorate.

In delay-sensitive wireless applications, it is desirable to have system throughput subject to given quality-of-service (QoS) requirements. The Effective Capacity (EC) is an analytical tool to find the maximum constant arrival rate that can be supported by the time-varying channel conditions while satisfying the statistical QoS guarantees imposed at the transmitter's queue \cite{wu2003effective}. It provides statistical QoS guarantees for throughput in terms of delay bounds. The EC has been used for various wireless channels, including cognitive radios \cite{musavian2010effective}, two-hop wireless channels \cite{qiao2012effective}, D2D \cite{shah2019impact}, licensed-unlicensed interoperable D2D \cite{shah2020statistical}, MIMO wireless networks \cite{cheng2013qos}, and underwater acoustic channels \cite{aman2020effective}. More recently, the EC analysis has also been performed for different retransmission schemes \cite{larsson2016effective,hu2020throughput,li2016throughput}. However, to the best of the authors' knowledge, this is the first study which provides the EC analysis of HARQ-enabled D2D communication in multi-tier future cellular networks.

More specifically, this work provides the following contributions:
\begin{itemize}
  \item We formulate a mode selection mechanism for D2D communication in multi-tier cellular networks as a ternary hypothesis testing problem and compute the corresponding error and correct-detection probabilities (Section III). This mechanism selects a communication mode among the three available modes (direct-D2D, micro-cell D2D, and macro-cell D2D) based on the pathloss measurements of the transmission link.
  \item We perform the EC analysis of HARQ-enabled D2D communication in multi-tier cellular networks. We also provide an analysis of the impact of the mode selection mechanism on the EC of HARQ-enabled multi-tier D2D communication. We assume that the CSI is not available at the transmit D2D node, and hence, it transmits at a fixed rate with a fixed power. It allows us to model the D2D link as a Markov system with six-states. We then perform the Markov chain modeling of the D2D link in both overlay and underlay settings.
  \item We provide the EC analysis of HARQ-enabled D2D communication for two distinct queue models at the transmit D2D node, based upon how it responds to the decoding failure at the receive D2D node. We provide closed-form expressions for the EC of HARQ-enabled D2D link under both queue models.
  \item We propose a special-case of truncated HARQ-enabled D2D communication in which a transmitting device transmits a packet only twice. It transmits in underlay settings in the first attempt, and if the receiver fails to decode the received packet successfully, it retransmits the same packet in overlay settings. If the receiver fails to decode the packet in the second transmission attempt, the transmitting device either drops the packet or lowers the transmission priority of the packet (based on the queue model in use). We then perform the EC analysis and provide the closed-form expressions for the EC of truncated HARQ-enabled D2D communication under both queue models.
  \item Lastly, we provide closed-form expressions for the optimal transmission rates for our proposed case of truncated-HARQ enabled D2D communication under both queue models.
\end{itemize}

The remainder of this paper is organized as follows. Section II presents the system model for our proposed multi-tier D2D communication and some background knowledge of EC, full-duplex, and ARQ. Section III introduces the mode selection mechanism for the proposed model. Section IV provides the EC analysis. Sections IV-A and IV-B describe the Markov chain modelling for the proposed ternary hypothesis testing (THT) problem. Section IV-C and IV-D present the EC of HARQ-enabled multi-tier D2D and of the truncated HARQ case of multi-tier D2D, respectively. Section V provides a detailed numerical investigation using simulation results. Finally, the paper concludes in Section VI.

\section{System Model and Background}
\subsection{System Model}
We consider a two-tier cellular network scenario in which a micro-cell (mC) BS is deployed in a coverage region of a macro-cell (MC) BS, as shown in Fig. \ref{system_model}. In a 5G multi-tier network architecture, MC-BS and mC-BS usually operate on lower frequencies and higher millimeter-wave frequencies, respectively \cite{panwar2016survey}. Therefore, they do not experience inter-tier interference.\footnote{In scenarios where all the tiers in a multi-tier network architecture use the same frequency spectrum, one needs to consider inter-tier interference for calculating the respective channel capacities \cite{6845056}.} MC-BS provides low-rate connectivity to a large number of users in a wide coverage area. On the other hand, mC-BS provides high data rate connectivity to a small number of users in a limited coverage area. In two-tier cellular networks, a D2D transmitting device can communicate with its receiver in three possible communication modes. It can either communicate directly (direct-D2D mode) or by relaying its data through MC-BS (MC-D2D mode) or mC-BS (mC-D2D mode), as shown in Fig. \ref{system_model}. It can also use either underlay (reusing the cellular user's resources) or overlay (using orthogonal resource blocks) settings for data transmission based on the network conditions. This problem of selecting a communication mode from the available ones is known as mode selection \cite{shah2019impact}.
\begin{figure}[ht]
\begin{center}
	\includegraphics[width=3.2in]{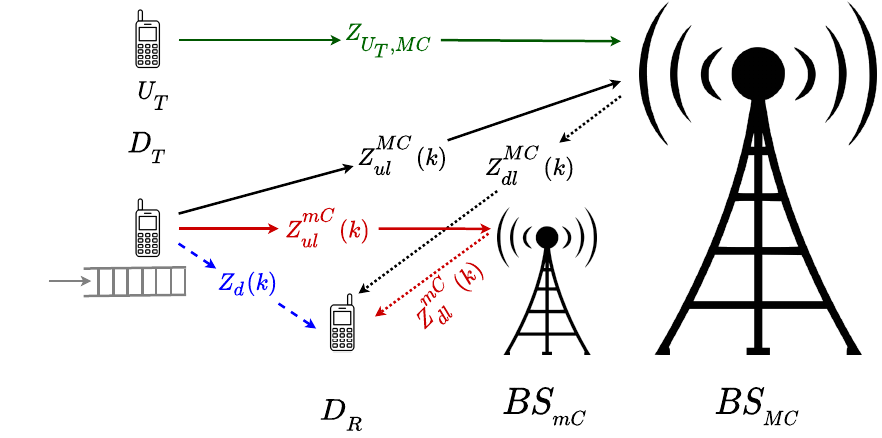}
\caption{System model: D2D communication in multi-tier cellular networks. $D_T$ communicates with $D_R$ in direct-D2D mode (shown as blue dotted arrows), mC-D2D mode (shown as red arrows), or MC-D2D mode (shown as black arrows); solid and dotted arrows show the uplink and downlink transmissions, respectively.}
\label{system_model}
\end{center}
\end{figure}

We also make the following assumptions for our analysis: i) direct-D2D, mC-D2D, and MC-D2D channels are block-fading channels that have Rayleigh distribution, and fading remains constant for each block, changing independently between blocks; ii) both the mC-BS and MC-BS use decode-and-forward operation to relay data to $D_R$ in mC-D2D and MC-D2D communication modes; iii) both mC-BS and MC-BS operate in full-duplex mode \cite{al2017full}, so we use the residual self-interference (SI) as a factor of noise in our analysis.

\subsection{Background}
\paragraph*{\textbf{Effective Capacity (EC)}}
EC is the maximum constant arrival rate that can be supported by the time-varying channel while satisfying the statistical QoS guarantees imposed as delay constraints at the transmitter's queue. It is defined as the log moment generating function (MGF) of the cumulative channel service process \cite{wu2003effective}:
\begin{equation}
    EC = -\frac{\Lambda (-\theta)}{\theta} = -\lim_{t\to\infty} \frac{1}{\theta t}\log \E[e^{-\theta \sum_{k=1}^{t}s(k)}]
\end{equation}
where $s(k)$ is the channel service process in slot $k$, $\E[.]$ is the expectation operator, and $\theta$ is the QoS exponent. $\theta \to \infty$ ($\theta \to 0$) refers to delay-sensitive (delay-tolerant) communication.

\paragraph*{\textbf{Full-Duplex Communication}}
In full-duplex communication, nodes can transmit and receive at the same frequency and at the same time, therefore theoretically, the communication link can achieve double throughput. In a full-duplex system, the transmit signal interferes with the receive signal, thus introduce a SI. Generally, the SI cancellation is performed in two stages. In the first stage, passive cancellation techniques, such as antenna-separation and antenna-shielding are used \cite{everett2014passive}. In the second stage, active cancellation techniques, which can be digital or analog, are used \cite{elsayed2020low,le2020beam,ahmed2015all}. However, a complete SI cancellation is impossible in practical full-duplex systems \cite{jain2011practical}. Therefore, a residual SI can still be experienced at the transmit node even after employing these cancellation techniques. To this end, we use the residual SI in our analysis as a factor of noise.

\paragraph*{\textbf{Automatic Repeat Request (ARQ)}}
In ARQ, parity bits are added to the transmitted packets for error detection at the receiver node. If the receiver node detects an error, it sends a NACK, and the transmitter retransmits the packet. There are also some variants of ARQ, such as go-back-N, stop-and-wait, and selective repeat. In HARQ, FEC codes are also added along with parity bits to the transmitted packet \cite{zhao2005practical}. In this protocol, the receiver node first tries to remove the error using the FEC codes when an erroneous packet is received, rather than sending NACK right away. Retransmission of the packet continues until the receiver node successfully decodes the received packet. HARQ is generally used in three different settings, explained in Section I. Additionally, if an upper limit is set for the packet's retransmission attempts, it is called truncated HARQ \cite{malkamaki2000performance}. Network coding can also be used to enhance the performance of HARQ in wireless broadcasting and multi-user networks, such as network-coded HARQ (NC-HARQ) \cite{ahmad2018analysis} and network-turbo-coding based HARQ \cite{xu2014ntc}. Basic NC-HARQ protocols may increase the computational complexity and delay. It can be avoided using low-complexity turbo coding techniques \cite{chen2013survey}. Moreover, to enhance the throughput of NC-HARQ even further, adaptive random network coding (ARNC) can be used \cite{hu2017arnc}. It adaptively encodes multiple packets with the highest priority in each time slot.

\section{Mode Selection}

The problem of mode selection at the transmit device $D_T$ is basically choosing the best transmission path among a set of candidate paths. For the considered system model, mode selection implies selection between direct path ($D_T$ $\to$ $D_R$), via micro-BS ($D_T \to BS_{mC} \to D_R$), and via macro-BS ($D_T \to BS_{MC} \to D_R$). Mode selection is traditionally feature-based whereby the features of the candidate channels (e.g., received signal strength, instant CSI, statistical CSI, instant signal to noise ratio, etc.) are utilized to select the most suitable channel for transmission during upcoming uplink slot. Furthermore, since the acquisition of instant CSI is quite demanding, this work does mode selection based upon statistical CSI (i.e., pathloss) only. \footnote{Statistical CSI (pathloss) is used as the sole feature for mode selection because it varies slowly in the wireless channel, and once estimated, can last for multiple seconds. On the other hand, instantaneous CSI changes quickly due to small-scale fading (if the wireless channel is stationary even then, small-scale fading needs to be estimated multiple times in one second). Moreover, the overhead associated with the channel estimation for instantaneous CSI is also large due to the channel training.} In our system model, $BS_{_{MC}}$ performs the mode selection mechanism. Specifically, during time slot $k$, the pathloss for all the three candidate channels ($D_T \to D_R$, $D_T \to BS_{_{mC}}$, and $D_T \to BS_{_{MC}}$) is measured by $D_R$, $BS_{_{mC}}$, and $BS_{_{MC}}$, respectively (see Appendix \ref{pathlossprop}). All the three pathloss measurements reach $BS_{_{MC}}$, which performs mode selection for the upcoming time slot ($k+1$ time slot). In short, $BS_{_{MC}}$ does the mode selection for time slot $k+1$ based upon the pathloss measurements of the current time slot (time slot $k$). Thus, by mode selection, $BS_{_{MC}}$ chooses the communication link with the smallest estimated pathloss and then conveys this information to $D_T$ through a downlink control channel. Further, because the proposed mode selection problem selects a communication mode based on the estimated pathloss measurements, we provide a step-by-step procedure for pathloss estimation in Appendix \ref{pathlossprop}.

\subsection{Ternary Hypothesis Testing (THT)}
The mode selection problem is formulated as the following THT problem.
\begin{equation}
	\label{eq:H0H1}
	 \begin{cases} H_0: & \text{direct-D2D mode ($D_T \to D_R$)} \\
                   H_1: & \text{micro cell (mC)-D2D mode ($D_T \to BS_{_{mC}} \to D_R$)} \\
                   H_2: & \text{macro-cell (MC)-D2D mode ($D_T \to BS_{_{MC}} \to D_R$).} \end{cases}
\end{equation}
Where the hypothesis $H_0$, $H_1$, $H_2$ states that communication via direct link, via micro-BS, via macro-BS is most suitable for transmission during the upcoming slot.

Let $L_d$, $L_{mC}$, $L_{MC}$ represent the true pathloss of $D_T \to D_R$, $D_T \to BS_{mC}$, and $D_T \to BS_{MC}$ links, respectively. Moreover, let $\widehat{L}_d$, $\widehat{L}_{mC}$, $\widehat{L}_{MC}$ represent the noisy measurement of $L_d$, $L_{mC}$, $L_{MC}$. Appendix \ref{pathlossprop} provides a step-by-step procedure for calculating the noisy measurement of pathloss for all the three candidate links. According to the mode selection problem, the direct-D2D mode will be selected when the estimated pathloss of $D_T \to D_R$ ($\widehat{L}_d$) link is the smallest among the estimated pathlosses of the candidate links. Similarly, mC-D2D and MC-D2D modes will be selected when the estimated pathloss of $D_T \to BS_{mC}$ ($\widehat{L}_{mC}$), and $D_T \to BS_{MC}$ ($\widehat{L}_{MC}$) link is the smallest, respectively. Now, the THT problem in \eqref{eq:H0H1} could be re-cast as follows:
%\begin{equation}
%	\label{eq:H0H1_2}
%	 \begin{cases} H_0: & \widehat{L}=L_d + \mu \; \text{(direct-D2D mode)} \\
%                   H_1: & \widehat{L}=L_{mC} + \mu \; \text{(mC-D2D mode)} \\
%                   H_2: & \widehat{L}=L_{MC} + \mu \; \text{(MC-D2D mode)} \end{cases}
%\end{equation}
%where $\mu \sim \mathcal{N}(0,\sigma^2)$ represents the estimation error. Alternatively, one could write: $\widehat{L}\big|H_0 \sim \mathcal{N} (L_d,\sigma^2)$, $\widehat{L}\big|H_1 \sim \mathcal{N} (L_{mC},\sigma^2)$, and $\widehat{L}\big|H_2 \sim \mathcal{N} (L_{MC},\sigma^2)$.
\begin{equation}
	\label{eq:H0H1_2}
	 \begin{cases} H_0: & \widehat{L}_d=\min\big\{\widehat{L}_d, \widehat{L}_{_{mC}}, \widehat{L}_{_{MC}}\big\} \\
                   H_1: & \widehat{L}_{_{mC}}=\min\big\{\widehat{L}_d, \widehat{L}_{_{mC}}, \widehat{L}_{_{MC}}\big\} \\
                   H_2: & \widehat{L}_{_{MC}}=\min\big\{\widehat{L}_d, \widehat{L}_{_{mC}}, \widehat{L}_{_{MC}}\big\}, \end{cases}
\end{equation}
where $\widehat{L}_d\sim \mathcal{N} (L_d,\sigma^2)$, $\widehat{L}_{_{mC}}\sim \mathcal{N} (L_{_{mC}},\sigma^2)$, and $\widehat{L}_{_{MC}}\sim \mathcal{N} (L_{_{MC}},\sigma^2)$ are the probability distribution of the noisy measurement of pathloss in direct-D2D, mC-D2D, and MC-D2D modes, respectively (see Appendix A). From eq. \eqref{eq:H0H1_2}, we can see that $H_0$ will be selected when the noisy measurement of the pathloss of $D_T \to D_R$ link ($\widehat{L}_d$) is the smallest. Similarly $H_1$ and $H_2$ will be selected when $\widehat{L}_{_{mC}}$ and $\widehat{L}_{_{MC}}$ are the smallest among the candidate links' pathlosses, respectively.

Let $\mathbf{l}=[L_d,L_{mC},L_{MC}]^T$. Also, let $\mathbf{l}^{(s)}=\text{sort}(\mathbf{l})$, where sort(.) operator sorts the elements of a vector in ascending order. Let $\mathbf{l}^{(s)}=[L_A,L_B,L_C]^T$; thus, $L_A < L_B < L_C$ (see Fig. \ref{THT}). In other words, $\mathbf{l}$, $\mathbf{l}^{(s)}$ are $3\times1$ vector each that contain the unsorted pathlosses, and sorted pathlosses of the three candidate links, respectively. Then, the following holds: $\hat{L}_A\sim \mathcal{N}(L_A,\sigma^2)$, $\hat{L}_B\sim \mathcal{N}(L_B,\sigma^2)$, $\hat{L}_C\sim \mathcal{N}(L_C,\sigma^2)$, where $\hat{L}_A$, $\hat{L}_B$, and $\hat{L}_C$ denote the noisy measurements of $L_A$, $L_B$, and $L_C$, respectively. Then, the THT problem for the sorted pathlosses could be formulated as the following two log-likelihood ratio tests (LLRT) (see section 3.2 of \cite{madhow2008fundamentals}):
\begin{subequations}\label{LLRT}
\begin{align}
  &\log_e (f_{\widehat{L}_A}(\widehat{l}_A) \underset{H_A}{\overset{H_B}{\gtrless}} \log_e (f_{\widehat{L}_B}(\widehat{l}_B))\\
  &\log_e (f_{\widehat{L}_B}(\widehat{l}_B)) \underset{H_B}{\overset{H_C}{\gtrless}} \log_e (f_{\widehat{L}_C}(\widehat{l}_C)),
  \end{align}
\end{subequations}
%By assuming equal prior probabilities\footnote{One can adjust prior probabilities according to the cell load condition. For instance, if MC is fully loaded, then $\pi_2$ should be selected less than $\pi_1$ and $\pi_0$.} ($\pi_0=\pi_1=\pi_2$)
where $f_X(x)$ represents the probability density function (pdf) of the random variable $X$. (\ref{LLRT}a) states that when the pdf of the estimated pathloss $\bar{L}_A$ is smaller than the pdf of the estimated pathloss $\bar{L}_B$, $H_A$ will be selected, and vice-versa. Similarly, (\ref{LLRT}b) represents that when the pdf of the estimated pathloss $\bar{L}_B$ is smaller than the pdf of the estimated pathloss $\bar{L}_C$, $H_B$ will be selected, and vice-versa.

\begin{figure}[ht]
\begin{center}
	\includegraphics[width=3in]{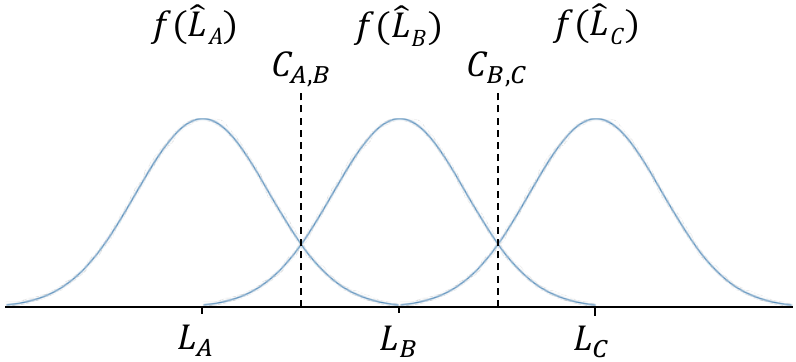}
\caption{The pdfs $f(\widehat{L}_A)$, $f(\widehat{L}_B)$, and $f(\widehat{L}_C)$: $C_{A,B}$ and $C_{B,C}$ are the decision thresholds; $L_A$, $L_B$, and $L_C$ are the true (but ordered) pathloss values of the three candidate links. }
\label{THT}
\end{center}
\end{figure}
\subsection{Performance of THT}
 We evaluate the performance of THT by using the correct-detection and error probabilities. Let $C_{A,B}$ and $C_{B,C}$ represent the decision thresholds (see Fig. \ref{THT}).Then, the three probabilities of correct-detection are given as:
\begin{subequations}\label{Pd_123}
\begin{align}
\begin{split}
P_{_{d,A}} & = \mathbb{P}(\widehat{L}_A<C_{A,B})\\
&= 1- Q\big(\frac{C_{A,B} - L_A}{\sigma}\big)
\end{split}\\
\begin{split}
P_{_{d,B}}  & = \mathbb{P}(C_{A,B}<\widehat{L}_B<C_{B,C} )\\
&= Q\big(\frac{C_{A,B}-L_B}{\sigma}\big)-Q\big(\frac{C_{B,C}-L_B}{\sigma}\big)
\end{split}\\
\begin{split}
P_{_{d,C}}  & = \mathbb{P}(\widehat{L}_C>C_{B,C})\\
&= Q\big(\frac{C_{B,C} - L_C}{\sigma}\big),
\end{split}
\end{align}
\end{subequations}
where $Q(x)=\frac{1}{\sqrt{2\pi}} \int_x^\infty  e^{-\frac{t^2}{2}} dt$ is the complementary cumulative distribution function (CCDF) of a standard normal random variable. (\ref{Pd_123}a), (\ref{Pd_123}b), and (\ref{Pd_123}c) represent the correct detection probabilities of selecting $H_A$, $H_B$, and $H_C$, respectively. More specifically, $P_{d,A}$ corresponds to the probability of the scenario when the estimated value of the smallest pathloss from the sorted pathloss vector ($1^{(s)}$) is smaller than $C_{A,B}$ (threshold between the pdfs of $\bar{L}_A$ and $\bar{L}_B$). In other words, this shows the probability that the mode selection mechanism selects $H_A$ when $\bar{L}_A$ was the smallest. Similarly, $P_{d,B}$ corresponds to the probability that the mode selection mechanism selects $H_B$ when $\bar{L}_B$ is smaller than $L_C$ and greater than $L_A$. Lastly, $P_{d,C}$ represents the probability that the mode selection mechanism selects $H_C$ when $\bar{L}_C$ was the biggest estimated pathloss among the estimated pathloss values of the three candidate links.  Moreover, the THT mechanism also incurs three kinds of errors, i.e., $P_{_{e,A}}=1-P_{_{d,A}}$, $P_{_{e,B}}=1-P_{_{d,B}}$, and $P_{_{e,C}}=1-P_{_{d,C}}$.

So far, we have computed the error and correct-detection probabilities for the ordered/sorted pathloss values ($L_A$, $L_B$, and $L_C$). However, the actual hypothesises are based on unsorted pathloss values. To this end, a relation needs to be established among the error and correct-detection probabilities of sorted/ordered pathloss values with the error and correct-detection probabilities of unsorted pathloss values. Let $P_{_{d,H_0}}$ ($P_{_{e,H_0}}$) represents the correct-detection (error) probability for selecting the direct-D2D mode. $P_{_{d,H_0}}$ shows that the direct-D2D link was the best (pathloss of the direct-D2D link was the smallest among all three pathloss), and the mode selection problem also detects the direct-D2D link. Whereas, $P_{_{e,H_0}}$ shows that the direct-D2D link was the best, but the mode selection problem makes an error and selects either mC-D2D or MC-D2D links for packet transmission. Similarly, $P_{_{d,H_1}}$ ($P_{_{e,H_1}}$) and $P_{_{d,H_2}}$ ($P_{_{e,H_2}}$) represent the correct-detection (error) probabilities for selecting mC-D2D and MC-D2D modes, respectively. \footnote{Ideally, the mode selection mechanism should be based on the true pathloss values of the candidate communication links. However, according to fundamental principles of statistical inference, the true pathloss can never be measured (since the received signal itself is corrupted with additive white Gaussian noise and channel fading). Therefore, the mode selection mechanism is based on the estimated pathloss values of the three communication links. These pathloss measurements come with some uncertainty (Gaussian, to be specific, shown in Appendix \ref{pathlossprop}), and due to this, the mode selection will not always be error-free. In other words, the uncertainty in the pathloss measurements introduces the error. Thus, the errors can never be made zero, but the hypothesis testing mechanism computes the thresholds in a way that these errors are minimized.} Further, to understand the relation between the error and correct-detection probabilities given in \eqref{Pd_123} and the probabilities for the error and the correct-detection of the actual hypothesises ($H_0$, $H_1$ and $H_2$), we provide the following example.
\paragraph*{\textbf{Example}}{Let $L_d = 90.7$, $L_{_{mC}} = 80.9$, and $L_{_{MC}} = 85.4$. Thus, $\mathbf{l} = [90.7, 80.9, 85.4]^T$. Then, $\mathbf{l}^{(s)} = \text{sort}(l) = [80.9, 85.4, 90.7]^T$. Thus, $L_A = 80.9$, $L_B = 85.4$, and $L_C = 90.7$. Furthermore, let $\sigma = 1$. Then, $\widehat{L}_d \sim \mathcal{N}(L_d, \sigma^2)$. Thus, $\widehat{L}_d \sim \mathcal{N}(90.7, 1)$. Similarly, $\widehat{L}_{_{mC}} \sim \mathcal{N}(80.9, 1)$ and $\widehat{L}_{_{MC}} \sim \mathcal{N}(85.4, 1)$. For measurements $\widehat{L}_A$, $\widehat{L}_B$, and $\widehat{L}_C$ of sorted pathloss values, we could write: $\widehat{L}_A \sim \mathcal{N}(L_A, \sigma^2)$. Thus, $\widehat{L}_A \sim \mathcal{N}(80.9, 1)$. Similarly, $\widehat{L}_B \sim \mathcal{N}(85.4, 1)$ and $\widehat{L}_C \sim \mathcal{N}(90.7, 1)$. Then, the correct-detection probabilities are $P_{_{d,A}} = 1-Q(2.25) = 0.988$, $P_{_{d,B}} = Q(-2.25)-Q(2.5) = 0.981$, and $P_{_{d,C}} = Q(-2.8) = 0.997$. Next, recall the following mapping due to the sort operation: $L_A = L_{_{mC}}$, $L_B = L_{_{MC}}$, and $L_C = L_{d}$. Thus, $P_{_{d,H_1}} = P_{_{d,A}} = 0.988$, $P_{_{d,H_2}} = P_{_{d,B}} = 0.981$, and $P_{_{d,H_0}} = P_{_{d,C}} = 0.997$. Similarly, $P_{_{e,H_1}} = P_{_{e,A}} = 0.012$, $P_{_{e,H_2}} = P_{_{e,B}} = 0.019$, and $P_{_{e,H_0}} = P_{_{e,C}} = 0.003$.} $\blacksquare$

Using the error and correct-detection probabilities of hypothesis $H_0$, $H_1$, and $H_2$, one can measure the performance of the mode selection mechanism. Next, we perform the statistical QoS analysis for HARQ-enabled D2D communication and observe the impact of mode selection on the analysis.
\section{Effective Capacity Analysis}
In our analysis, we consider $D_T$ is unaware of CSI prior to the transmission; therefore, it transmits using a fixed transmit power $\bar{P}$ at a fixed rate $r$ (bits/sec). Consequently, for each of the three hypotheses (direct-D2D, mC-D2D, and MC-D2D modes), the D2D link is considered ON when the instantaneous channel capacity of the link is greater than the fixed transmission rate of $D_T$; otherwise, the D2D link is considered in the OFF condition. To sum things up, due to the mode selection and the nonavailability of CSI at the transmitter (CSIT), one can model the D2D link as a Markovian process. Below, we describe the details of the Markov chain modelling of the D2D link for the overlay scenario and the underlay scenario, followed by the EC analysis of HARQ-enabled D2D communication.
\subsection{Markov Chain Modelling of Overlay-D2D}
Let us consider $C_d(k)$, $C_{_{mC}}(k)$, and $C_{_{MC}}(k)$ as the instantaneous channel capacities, during time slot $k$, of the direct-D2D, mC-D2D, and MC-D2D links, respectively. When $r<C_d(k)$, $r<C_{_{mC}}(k)$, and $r<C_{_{MC}}(k)$, the direct-D2D, mC-D2D, and MC-D2D links, respectively, transmit $r$ bits/sec; thus, they are considered as being in the ON state. On the other hand, when  $r>C_d(k)$, $r>C_{_{mC}}(k)$, and $r>C_{_{MC}}(k)$, the direct-D2D, mC-D2D, and MC-D2D links, respectively, transmit $0$ bits/sec; thus, they are considered as being in the OFF state. This leads to the six-state Markovian process, as shown in Table \ref{states},
\begin{table}[b]
\centering
\caption{Markov Chain Representation of Six States.}
\label{states}
\resizebox{\columnwidth}{!}{\begin{tabular}{|
>{\columncolor[HTML]{EFEFEF}}c |c|c|c|}
\hline \hline
State & \cellcolor[HTML]{EFEFEF}Description                                                         & \cellcolor[HTML]{EFEFEF}Notation                                           & \cellcolor[HTML]{EFEFEF}Action                                                                          \\ \hline
$s_1$  & \begin{tabular}[c]{@{}c@{}}Direct-D2D mode is\\ selected and the link is ON\end{tabular}    & \begin{tabular}[c]{@{}c@{}}$H_0$ \&\\ $r < C_d(k)$\end{tabular} & \begin{tabular}[c]{@{}c@{}}decoding successful at $D_R$,\\ $r$ bits received\end{tabular} \\ \hline
$s_2$  & \begin{tabular}[c]{@{}c@{}}Direct-D2D mode is\\ selected and the link is OFF\end{tabular}    & \begin{tabular}[c]{@{}c@{}}$H_0$ \&\\ $r > C_d(k)$\end{tabular} & \begin{tabular}[c]{@{}c@{}}decoding failure at $D_R$,\\ $0$ bits received\end{tabular}                                                                                        \\ \hline
$s_3$  & \begin{tabular}[c]{@{}c@{}}mC-D2D mode is\\ selected and the link is ON\end{tabular} & \begin{tabular}[c]{@{}c@{}}$H_1$ \&\\ $r < C_{_{mC}}(k)$\end{tabular} & \begin{tabular}[c]{@{}c@{}}decoding successful at $D_R$,\\ $r$ bits received\end{tabular} \\ \hline
$s_4$  & \begin{tabular}[c]{@{}c@{}}mC-D2D mode is\\ selected and the link is OFF\end{tabular}     & \begin{tabular}[c]{@{}c@{}}$H_1$ \&\\ $r > C_{_{mC}}(k)$\end{tabular} & \begin{tabular}[c]{@{}c@{}}decoding failure at $D_R$,\\ $0$ bits received\end{tabular} \\ \hline
$s_5$  & \begin{tabular}[c]{@{}c@{}}MC-D2D mode is\\ selected and the link is ON\end{tabular}    & \begin{tabular}[c]{@{}c@{}}$H_2$ \&\\ $r < C_{_{MC}}(k)$\end{tabular} & \begin{tabular}[c]{@{}c@{}}decoding successful at $D_R$,\\ $r$ bits received\end{tabular} \\ \hline
$s_6$  & \begin{tabular}[c]{@{}c@{}}MC-D2D mode is\\ selected and the link is OFF\end{tabular} & \begin{tabular}[c]{@{}c@{}}$H_2$ \&\\ $r > C_{_{MC}}(k)$\end{tabular} & \begin{tabular}[c]{@{}c@{}}decoding failure at $D_R$,\\ $0$ bits received\end{tabular}                                                                                         \\ \hline
\end{tabular}}
\end{table}
The instantaneous channel capacity of the direct-D2D link is,
\begin{equation}
 \label{eq:Cd}
{C^{o}_{d}}(k) = B \log_2 \bigg(1+\frac{\bar{P} Z_d(k)}{L_{d}(k) N_0}\bigg) = B \log_2 \big(1+\gamma_d(k)\big)
\end{equation}
where $Z_d(k)$ and $L_{d}(k)$ represent the channel coefficients and the pathloss of the direct-D2D link in time slot $k$, respectively, $B$ represents the bandwidth allocated to the transmit D2D node, and $\gamma_d(k)$ represent the signal-to-noise-ratio (SNR) of the direct-D2D link in time slot $k$. Before finding the instantaneous channel capacity for the mC-D2D link, we note that $BS_{_{mC}}$ operates in full-duplex mode. Therefore, to find it's channel capacity, we have the following proposition \ref{prop_outage_d}.
\begin{prop}\label{prop_outage_d}
  The instantaneous channel capacity of full-duplex enabled mC-D2D link ($D_T \to BS_{mC} \to D_R$) in overlay settings is,
  \begin{equation*}
    {C^{o}_{_{mC}}}(k)= B \log_2 \big(1+\gamma_{_{mC}}(k)\big)
  \end{equation*}
  where $\gamma_{_{mC}}(k) = \min \big\{ \gamma^{^{mC}}_{_{ul}}(k), \gamma^{^{mC}}_{_{dl}}(k) \big\}$ is the net-SNR of the mC-D2D link, and $\gamma^{^{mC}}_{_{ul}}(k)$ and $\gamma^{^{mC}}_{_{dl}}(k)$ are the SNRs of the uplink and the downlink channels of mC-D2D mode, respectively.
\end{prop}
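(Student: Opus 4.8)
The plan is to treat the mC-D2D link as a two-hop decode-and-forward (DF) relay channel and to invoke the standard bottleneck characterisation of its end-to-end capacity. By assumption (ii), $BS_{_{mC}}$ must correctly decode the packet received on the uplink $D_T \to BS_{mC}$ before re-encoding and forwarding it on the downlink $BS_{mC} \to D_R$. Hence a message transmitted at rate $R$ is delivered reliably to $D_R$ if and only if \emph{both} hops can individually support $R$, i.e. $R \le \min\{C_{ul}(k), C_{dl}(k)\}$, where $C_{ul}(k)$ and $C_{dl}(k)$ are the per-hop Shannon capacities. Since in mC-D2D mode $D_R$ hears only the relayed signal (there is no exploitable direct path), the full-duplex relay-channel cut-set bound collapses to this per-hop minimum, which I would take as the end-to-end capacity $C^{o}_{mC}(k) = \min\{C_{ul}(k), C_{dl}(k)\}$.

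First I would write down the per-hop capacities. Each hop occupies the same bandwidth $B$ allocated to the D2D node, so that $C_{ul}(k) = B\log_2\big(1 + \gamma^{^{mC}}_{_{ul}}(k)\big)$ and $C_{dl}(k) = B\log_2\big(1 + \gamma^{^{mC}}_{_{dl}}(k)\big)$, in direct analogy with \eqref{eq:Cd} for the direct link. The only modification relative to the direct-D2D case lies in the SNR denominators: because $BS_{_{mC}}$ operates in full-duplex (assumption (iii)), its simultaneous transmission leaks a residual self-interference component into the received uplink signal, which by the adopted model is folded into the effective noise floor. Thus $\gamma^{^{mC}}_{_{ul}}(k)$ takes the form $\bar{P}\,Z_{ul}(k)\big/\big(L_{mC}(k)(N_0 + I_{SI})\big)$, with the residual-SI term $I_{SI}$ inflating $N_0$, and $\gamma^{^{mC}}_{_{dl}}(k)$ carries the corresponding downlink pathloss, fading coefficient and SI-inflated noise. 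I would state both explicitly and note that they are strictly positive.

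Second, I would combine the two hops. Because $B > 0$ and $x \mapsto B\log_2(1+x)$ is strictly increasing on $[0,\infty)$, the smaller of the two capacities is attained precisely at the hop with the smaller SNR, so the outer $\min$ commutes with this monotone map:
\begin{equation*}
C^{o}_{mC}(k) = \min\big\{C_{ul}(k), C_{dl}(k)\big\} = B\log_2\!\big(1 + \min\{\gamma^{^{mC}}_{_{ul}}(k), \gamma^{^{mC}}_{_{dl}}(k)\}\big).
\end{equation*}
Setting $\gamma_{_{mC}}(k) = \min\{\gamma^{^{mC}}_{_{ul}}(k), \gamma^{^{mC}}_{_{dl}}(k)\}$ then yields exactly the stated expression.

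The main obstacle I anticipate is not the closing algebraic step, which is a one-line monotonicity argument, but rather justifying the two modelling ingredients cleanly: (a) that the DF relay's end-to-end capacity remains the per-hop minimum even under full-duplex operation, which hinges on the observation that $D_R$ receives no usable direct-link copy so the cut-set bound reduces to $\min\{C_{ul}, C_{dl}\}$; and (b) the precise form of the residual-SI term entering each SNR, which must be made consistent with the full-duplex noise model introduced in the system-model section. Once these two points are pinned down, the proposition follows immediately.
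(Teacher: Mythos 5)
Your proposal follows essentially the same route as the paper's proof: decompose the two-hop decode-and-forward link into uplink and downlink, take the end-to-end capacity as the minimum of the per-hop Shannon capacities, fold the residual self-interference into the noise, and use the strict monotonicity of $x \mapsto B\log_2(1+x)$ to pull the $\min$ inside the logarithm. The one discrepancy is your claim that the downlink SNR also carries ``SI-inflated noise'': in the paper's model only the uplink suffers residual self-interference, because the SI arises at the full-duplex $BS_{_{mC}}$ (which transmits and receives simultaneously), whereas the downlink receiver $D_R$ operates in half-duplex and its SNR is simply $\bar{P}_{_{mC}} Z^{^{mC}}_{_{dl}}(k)\big/\big(L^{^{mC}}_{_{dl}}(k) N_0\big)$ with no SI term; this slip does not affect the structure of the argument, but it does mischaracterize $\gamma^{^{mC}}_{_{dl}}(k)$ as defined in the proposition.
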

\begin{proof}
  Given in Appendix \ref{prop1}.
\end{proof}

Similarly, one can find the instantaneous channel capacity for full-duplex enabled MC-D2D link ($D_T \to BS_{MC} \to D_R$) in overlay settings ($C^{o}_{_{MC}}(k)$) by following the similar steps given in Appendix \ref{prop1}. Consequently, it turns out to be,
\begin{equation}\label{eq:CMc}
C^{o}_{_{MC}}(k) = B \log_2 \big(1+\gamma_{_{MC}}(k)\big).
\end{equation}
Where $\gamma_{_{MC}}(k) = \min \big\{ \gamma^{^{MC}}_{_{ul}}(k), \gamma^{^{MC}}_{_{dl}}(k) \big\}$ is the net-SNR of the MC-D2D link, and $\gamma^{^{MC}}_{_{ul}}(k)$ and $\gamma^{^{MC}}_{_{dl}}(k)$ are the SNRs of the uplink and the downlink channels of MC-D2D mode, respectively.

Next, we find the state transition probabilities for states $s_1,s_2,s_3,s_4,s_5$, and $s_6$, as shown in Table \ref{states}. Let $p_{i,j}=[\mathbf{P}_o]_{i,j}$ be the transition probability from state $i$ to state $j$, with $\mathbf{P}_o$ as the transition probability matrix for overlay-D2D. Due to the block-fading nature of the channel, state change for the D2D link occurs in every timeblock. Now, we calculate the state transition probabilities for the Markov chain model, starting with the following: \footnote{Note that, state transition probability for each state depends upon two factors; the decision of the mode selection problem and the condition on the transmission rate.}
\begin{equation}
p_{1,1} = \mathbb{P}\big\{  H_0(k) \; \& \; r<C_{d_o}(k)  \big|  H_0(k-1) \; \& \; r<C_{d_o}(k-1)  \big\}.
\end{equation}
The condition on the transmission rate can also be translated into the SNR of the transmission link lower bounded by a minimum required value of SNR. This is shown in the following:
\begin{equation}
p_{1,1} = \mathbb{P}\big\{  H_0(k) \; \& \; \gamma_d(k)>\gamma_{req}  \big|  H_0(k-1) \; \& \; \gamma_d(k-1)>\gamma_{req}  \big\},
\end{equation}
where $\gamma_{req}=2^{r/B}-1$. Because the mode selection process is independent of the fading process $\{\gamma_d\}_k$, we can write:
\begin{equation}
p_{1,1} = \mathbb{P}\big\{  H_0(k)\big|H_0(k-1) \big\} \mathbb{P}\big\{ \gamma_d(k)>\gamma_{req}\big|\gamma_d(k-1)>\gamma_{req}  \big\}.
\end{equation}
Moreover, we note that the fading process $\{\gamma_d\}_k$ as well as the mode selection process are memoryless (because these processes change independently between time slots). Specifically, $\mathbb{P}(H_0(k)|H_{y}(k-1))=\mathbb{P}(H_0(k))$ for $y \in \{0,1,2\}$, and $\mathbb{P}(\gamma_d(k)|\gamma_d(k-1))=\mathbb{P}(\gamma_d(k))$. Therefore,
\begin{equation}
\label{eq:p11}
p_{1,1} = \mathbb{P}\big(H_0(k)\big) \mathbb{P}\big(\gamma_d(k)>\gamma_{req}\big),
\end{equation}
where $\mathbb{P}(H_0(k))=\mathbb{P}(H_0|H_0)+\mathbb{P}(H_0|H_1)+\mathbb{P}(H_0|H_2)$, and $\mathbb{P}(H_0|H_0) = P_{_{d,H_0}}$. Because the SNR $\gamma_d(k)$ is exponentially distributed, $\mathbb{P}(\gamma_d(k)>\gamma_{req}) = 1 - \mathbb{P}(\gamma_d(k)<\gamma_{req}) = e^{-\gamma_{req}/\E(\gamma_d(k))}$, where $\E(\gamma_d(k))=\frac{\bar{P}}{L_{d} N_0}$. Now, one can see that the transition probability $p_{1,1}$ does not depend on the original state. Therefore, $p_{i,1} = p_1$.
Similarly,
\begin{equation}
\label{eq:p2p3p4}
\begin{split}
p_{i,2} &= p_2 = \mathbb{P}\big(H_0(k)\big) \mathbb{P}\big(\gamma_d(k)<\gamma_{req}\big) \\
p_{i,3} &= p_3 = \mathbb{P}\big(H_1(k)\big) \mathbb{P}\big(\gamma_{_{mC}}(k)>\gamma_{req}\big) \\
p_{i,4} &= p_4 = \mathbb{P}\big(H_1(k)\big) \mathbb{P}\big(\gamma_{_{mC}}(k)<\gamma_{req}\big) \\
p_{i,5} &= p_5 = \mathbb{P}\big(H_2(k)\big) \mathbb{P}\big(\gamma_{_{MC}}(k)>\gamma_{req}\big) \\
p_{i,6} &= p_6 = \mathbb{P}\big(H_2(k)\big) \mathbb{P}\big(\gamma_{_{MC}}(k)<\gamma_{req}\big),
\end{split}
\end{equation}
where $\mathbb{P}(\gamma_d(k)<\gamma_{req}) = 1-e^{-\gamma_{req}/\E(\gamma_d(k))}$. $\mathbb{P}(H_1(k))=\mathbb{P}(H_1|H_0)+\mathbb{P}(H_1|H_1)+\mathbb{P}(H_1|H_2)$, where $\mathbb{P}(H_1|H_1) = P_{_{d,H_1}}$. Similarly, $\mathbb{P}(H_2(k))=\mathbb{P}(H_2|H_0)+\mathbb{P}(H_2|H_1)+\mathbb{P}(H_2|H_2)$, where $\mathbb{P}(H_2|H_2) = P_{_{d,H_2}}$. Note that $\gamma_{_{mC}}(k)$ and $\gamma_{_{MC}}(k)$ are also exponentially distributed random variables (R.V.) (because the minimum of two exponentially distributed R.V.s is also an exponential R.V.). Thus, $\mathbb{P}(\gamma_{_{mC}}(k)>\gamma_{req}) = e^{-\gamma_{req}/\E(\gamma_{_{mC}}(k))}$, where $\E(\gamma_{_{mC}}(k))=\frac{\E[\gamma^{^{mC}}_{_{ul}}]\E[\gamma^{^{mC}}_{_{dl}}]}{\E[\gamma^{^{mC}}_{_{ul}}]+\E[\gamma^{^{mC}}_{_{dl}}]}$, with $\E[\gamma^{^{mC}}_{_{ul}}]=\frac{\bar{P}}{1+\bar{\alpha}\bar{P}_{_{mC}}^{\beta}}$ and $\E[\gamma^{^{mC}}_{_{dl}}]=\frac{\bar{P}_{_{mC}}}{L^{^{mC}}_{_{dl}}(k) N_0}$. Finally, $\mathbb{P}(\gamma_{_{mC}}(k)<\gamma_{req}) = 1-e^{-\gamma_{req}/\E(\gamma_{_{mC}}(k))}$. Similarly, one can find $\mathbb{P}(\gamma_{_{MC}}(k)>\gamma_{req})$ and $\mathbb{P}(\gamma_{_{MC}}(k)<\gamma_{req})$ using the same framework, which turns out to be $e^{-\gamma_{req}/\E(\gamma_{_{MC}}(k))}$ and $e^{-\gamma_{req}/\E(\gamma_{_{MC}}(k))}$, respectively. With this, each row of $\mathbf{P}_o$ becomes: $\mathbf{p}_{o,i}=[p_{o,1}, p_{o,2}, p_{o,3}, p_{o,4}, p_{o,5}, p_{o,6}]$. Note that, due to identical rows, $\mathbf{P}_o$ has rank 1.
\begin{remark}
The mC-D2D and MC-D2D modes transfer data from $D_T$ to $D_R$ using a two-hop communication link. This implies two queues in the network; one at $D_T$ and the other at the BS. However, this work assumes that both BSs ($BS_{_{mC}}$ and $BS_{_{MC}}$) have infinite-sized queues, know perfect CSI ($Z_{dl}^{^{mC}}$ and $Z_{dl}^{^{MC}}$), and their average transmit powers ($\bar{P}_{_{mC}}$ and $\bar{P}_{_{MC}}$) are greater than the average transmit power of $D_T$ ($\bar{P}$). Therefore, the problem of queue overflow does not occur at either of the BS.
\end{remark}

\subsection{Markov Chain Modelling of Underlay-D2D}
In the underlay-D2D scenario, $D_T$ and $D_R$ reuses the cellular user's resources; hence, they experiences interference from $U_T$. Therefore, to compute the channel capacities $C^{u}_{d}(k)$, $C^{u}_{_{mC}}(k)$, and $C^{u}_{_{MC}}(k)$, we calculate the signal-to-interference-and-noise ratio (SINR) in each communication mode, which is defined as $\Gamma_{d}(k)$, $\Gamma_{_{mC}}(k)$, and $\Gamma_{_{MC}}(k)$. The SINR for the direct-D2D mode can be calculated as : $\Gamma_d(k)=\frac{\bar{P} Z_d(k)/L_d}{I_d+N_0}$, where $I_d=\frac{\bar{P}_{U_T} Z_{U_T,D_R}}{L_{U_T,D_R}}$. $\bar{P}_{U_T}$ is the average transmit power of $U_T$, and pathloss and the channel coefficients between $U_T$ and $D_R$ are $L_{U_T,D_R}$ and $Z_{U_T,D_R}$, respectively. The SINRs of UL and DL of the mC-D2D mode can be written as $\Gamma^{^{mC}}_{_{ul}}(k)=\frac{\bar{P} Z^{^{mC}}_{_{ul}}(k)/L^{^{mC}}_{_{ul}}}{I^{^{mC}}_{_{ul}}+N_0+\alpha\bar{P}_{_{mC}}^{\beta}}$ and $\Gamma^{^{mC}}_{_{dl}}(k)=\frac{\bar{P}_{mC} Z^{^{mC}}_{_{ul}}(k)/L^{^{mC}}_{_{ul}}(k)}{I_{d}+N_0}$, where $I^{^{mC}}_{_{ul}}=\frac{\bar{P}_{U_T} Z_{U_T,mC}(k)}{L_{U_T,mC}}$. Here, $Z_{U_T,mC}(k)$ and $L_{U_T,mC}(k)$ represent the channel coefficient and pathloss between $U_T$ and $BS_{_{mC}}$ in time slot $k$, respectively. Similarly, the SINRs on UL and DL in MC-D2D mode are $\Gamma^{^{MC}}_{_{ul}}(k)=\frac{\bar{P} Z^{^{MC}}_{_{ul}}(k)/L^{^{MC}}_{_{ul}}}{I^{^{MC}}_{_{ul}}+N_0+\alpha\bar{P}_{_{MC}}^{\beta}}$ and $\Gamma^{^{MC}}_{_{dl}}(k)=\frac{\bar{P}_{MC} Z^{^{MC}}_{_{ul}}(k)/L^{^{MC}}_{_{ul}}(k)}{I_{d}+N_0}$, respectively, where $I^{^{MC}}_{_{ul}}=\frac{\bar{P}_{U_T} Z_{U_T,MC}(k)}{L_{U_T,MC}}$. Note that the underlay scenario requires re-computation of six probabilities given in \eqref{eq:p11} and \eqref{eq:p2p3p4}. To do so, we consider an interference-limited scenario, whereby, by neglecting noise, we obtain signal-to-interference (SIR) expressions for all three communication modes. For the case of direct-D2D mode, the SIR expression would be: $\Upsilon_d = \frac{\Psi_d}{I_d}$, where $\Psi_d = \frac{\bar{P}Z_d(k)}{L_d}$. Observe that $\Psi_d \sim \exp(\frac{L_d}{\bar{P}})$ and $I_d \sim \exp(\frac{L_{U_T,D_R}}{\bar{P_{U_T}}})$. Then, the outage probability for the direct-D2D mode becomes
\begin{equation}\label{outage_prob_underlay1}
  \begin{split}
  \mathbb{P}(\Upsilon_d (k) < \gamma_{req})&=\frac{L_d/\bar{P}}{\frac{L_d/\bar{P}+L_{U_T,D_R}/P_{U_T}}{\gamma_{_{req}}}}\\
  &=\frac{L_d\gamma_{_{req}}P_{U_T}}{L_dP_{U_T}+L_{U_T,D_R}\bar{P}}.
  \end{split}
\end{equation}
Similarly, the probability $\mathbb{P}(\Upsilon_d (k) < \gamma_{req})$ becomes
\begin{equation}\label{outage_prob_underlay2}
  \mathbb{P}(\Upsilon_d (k) > \gamma_{req}) = 1-\frac{L_d\gamma_{_{req}}P_{U_T}}{L_dP_{U_T}+L_{U_T,D_R}\bar{P}}.
\end{equation}
The probabilities in (\ref{outage_prob_underlay1}) and (\ref{outage_prob_underlay2}) allow us to compute $p_{u,1}$ and $p_{u,2}$. Now, for the mC-D2D mode, let $\Upsilon_{_{mC}} = \min\{\Upsilon^{^{mC}}_{_{ul}}, \Upsilon^{^{mC}}_{_{dl}}\}$, where $\Upsilon^{^{mC}}_{_{ul}} = \Psi_{_{ul}}^{^{mC}}/I_{_{ul}}^{^{mC}}$ and $\Upsilon^{^{mC}}_{_{dl}} = \Psi_{_{dl}}^{^{mC}}/I_d$ are the SIR expressions for $D_T \to BS_{_{mC}}$ and $ BS_{_{mC}} \to D_R$ links, respectively, and where $\Psi_{_{ul}}^{^{mC}} = \bar{P}Z_{_{ul}}^{^{mC}}/L_{_{ul}}^{^{mC}}$ and $\Psi_{_{dl}}^{^{mC}} = \bar{P}_{_{mC}}Z_{_{dl}}^{^{mC}}/L_{_{dl}}^{^{mC}}$. Also, observe that $P_{_{ul}}^{^{mC}} \sim \exp(L_{_{ul}}^{^{mC}}/\bar{P})$, $P_{_{dl}}^{^{mC}} \sim \exp(L_{_{dl}}^{^{mC}}/\bar{P}_{_{mC}})$, and $I_{_{ul}}^{^{mC}} \sim \exp(L_{_{U_T,mC}}/\bar{P}_{_{mC}})$. Because $\Upsilon_{_{ul}}^{^{mC}}$ and $\Upsilon_{_{dl}}^{^{mC}}$ are independent R.V., the outage probability for mC-D2D mode becomes
\begin{equation}\label{outage_prob_underlay3}
    \begin{split}
    &\mathbb{P}(\Upsilon_{_{mC}}(k) < \gamma_{_{req}})\\
    &= \frac{L_{_{ul}}^{^{mC}}/\bar{P}}{\frac{L_{_{ul}}^{^{mC}}/\bar{P}+L_{_{dl}}^{^{mC}}/\bar{P}_{_{mC}}}{\gamma_{_{req}}}} + \frac{L_{_{U_T,mC}}/\bar{P}_{_{mC}}}{\frac{L_{_{U_T,mC}}/\bar{P}_{_{mC}}+ L_{U_T,D_R}/\bar{P}_{U_T}}{\gamma_{_{req}}}}\\
    &- \frac{L_{_{ul}}^{^{mC}}/\bar{P}}{\frac{L_{_{ul}}^{^{mC}}/\bar{P}+L_{_{dl}}^{^{mC}}/\bar{P}_{_{mC}}}{\gamma_{_{req}}}} \times \frac{L_{_{U_T,mC}}/\bar{P}_{_{mC}}}{\frac{L_{_{U_T,mC}}/\bar{P}_{_{mC}}+ L_{U_T,D_R}/\bar{P}_{U_T}}{\gamma_{_{req}}}}\\
    &=\frac{\gamma_{_{req}}\big[L^{^{mC}}_{_{ul}}\bar{P}_{_{mC}}(-\gamma_{_{req}}\bar{P}_{U_T} + \bar{P}_{_{mC}}+2\bar{P}_{_{mC}} ) + L_{_{dl}}^{^{mC}}\bar{P}\bar{P}_{U_T} \big]}{(\bar{P}_{U_T} + \bar{P}_{_{mC}})(L_{_{dl}}^{^{mC}}\bar{P}+L_{_{ul}}^{^{mC}}\bar{P}_{_{mC}})}.
  \end{split}
\end{equation}
Similarly, the probability $\mathbb{P}(\Upsilon_{_{mC}}(k) > \gamma_{_{req}})$ becomes
\begin{equation}\label{outage_prob_underlay4}
\begin{split}
&\mathbb{P}(\Upsilon_{_{mC}}(k) > \gamma_{_{req}})\\
&=1-\frac{\gamma_{_{req}}\big[L^{^{mC}}_{_{ul}}\bar{P}_{_{mC}}(-\gamma_{_{req}}\bar{P}_{U_T} + \bar{P}_{_{mC}}+2\bar{P}_{_{mC}} ) + L_{_{dl}}^{^{mC}}\bar{P}\bar{P}_{U_T} \big]}{(\bar{P}_{U_T} + \bar{P}_{_{mC}})(L_{_{dl}}^{^{mC}}\bar{P}+L_{_{ul}}^{^{mC}}\bar{P}_{_{mC}})}.
\end{split}
\end{equation}
The probabilities in (\ref{outage_prob_underlay3}) and (\ref{outage_prob_underlay4}) allow us to compute $p_{u,3}$ and $p_{u,4}$. For the MC-D2D mode, let $\Upsilon_{_{MC}} = \min\{\Upsilon^{^{MC}}_{_{ul}},\Upsilon^{^{MC}}_{_{dl}}\}$, where $\Upsilon^{^{MC}}_{_{ul}} = \Psi_{_{ul}}^{^{MC}}/I_{_{ul}}^{^{MC}}$ and $\Upsilon^{^{MC}}_{_{dl}} = \Psi_{_{dl}}^{^{MC}}/I_d$ are the SIR expressions for $D_T \to BS_{_{MC}}$ and $ BS_{_{MC}} \to D_R$ links, respectively. Here, $\Psi_{_{ul}}^{^{MC}} = \bar{P}Z_{_{ul}}^{^{MC}}/L_{_{ul}}^{^{MC}}$ and $\Psi_{_{dl}}^{^{MC}} = \bar{P}_{_{MC}}Z_{_{dl}}^{^{MC}}/L_{_{dl}}^{^{MC}}$. Also, observe that $P_{_{ul}}^{^{MC}} \sim \exp(L_{_{ul}}^{^{MC}}/\bar{P})$, $P_{_{dl}}^{^{MC}} \sim \exp(L_{_{dl}}^{^{MC}}/\bar{P}_{_{MC}})$, and $I_{_{ul}}^{^{MC}} \sim \exp(L_{_{U_T,mC}}/\bar{P}_{_{MC}})$. Similar to the case of the mC-D2D mode, $\Upsilon^{^{MC}}_{_{ul}}$ and $\Upsilon^{^{MC}}_{_{dl}}$ are independent random variables; therefore, $\mathbb{P}(\Upsilon_{_{MC}}(k) < \gamma_{_{req}})$ and $\mathbb{P}(\Upsilon_{_{MC}}(k) > \gamma_{_{req}})$ can be calculated by substituting $L_{_{ul}}^{^{MC}}$, $L_{_{dl}}^{^{MC}}$, and $\bar{P}_{_{MC}}$ into (\ref{outage_prob_underlay3}) and (\ref{outage_prob_underlay4}), respectively. These probabilities will then allow us to compute $p_{u,5}$ and $p_{u,6}$. By using the probabilities found above, we can find the state transition probability matrix for underlay D2D ($\mathbf{P}_u$). Similar to $\mathbf{P}_o$, $\mathbf{P}_u$ is also of unit rank 1, with each row $\mathbf{p}_{u,i}=[p_{u,1}, p_{u,2}, p_{u,3}, p_{u,4}, p_{u,5}, p_{u,6}]$.

\subsection{Effective Capacity of HARQ-enabled D2D}
In our analysis, we use HARQ for retransmission of the packet. In HARQ, each data packet is encoded into $M$ codeword blocks, and $M$ defines the maximum number of the allowed retransmissions of a packet, which is adjustable according to the reliability and delay requirements of the system \cite{yadav2017energy}.
%We have considered the lowest possible value of M ($M=2$) due to the opportunistic nature of the channel allocation in D2D communication.
Let us consider a transmission period $T$ containing $M$ codewords/fading blocks, with $l$ as the size of each fading block. In each transmission period, a codeword is transmitted; if $D_R$ decodes the codeword successfully, it sends an ACK, and the transmission period ends. Contrarily, if decoding fails at $D_R$, a NACK is sent to $D_T$; then, $D_T$ retransmits the packet with a new set of parity bits (codeword). This process continues until the packet is decoded successfully at $D_R$ or until the maximum limit of the retransmissions ($M$) is reached. Note that in HARQ, when $D_R$ decodes the received packet at the $m^{\text{th}}$ retransmission attempt (using $m$ number of codewords), it means that $m-1$ number of trials have finished and were unsuccessful. If $D_R$ fails to decode a packet on the $M^{\text{th}}$ retransmission attempt, an outage occurs. At that point, $D_T$ has two options: either delete that packet from the queue or reduce the priority of that packet and transmit the next packet with the highest priority. In the second option, the failed packet will then be transmitted when its priority becomes highest. We have modelled this scenario into two queue models. In model 1 ($n_1$), if a packet is not successfully decoded by $D_R$ even after the deadline occurs ($M$ number of unsuccessful attempts), then the packet's priority is reduced and the packet possessing the highest priority is transmitted in the following transmission period. In model 2 ($n_2$), the packet is deleted from $D_T$'s queue if not successfully decoded by $D_R$ after $M$ number of retransmission attempts.

The EC of HARQ-enabled D2D communication under the assumption of constant arrival ($a$) and transmission rates ($r$), given the QoS exponent $\theta$ and the specified retransmission constraint $M$, is given as follows \cite{larsson2016effective},
\begin{equation}\label{EC_main}
  EC^{^\text{HARQ}}_{n_j} = \frac{-1}{\theta} \log_e (\lambda_{n_j} +),
\end{equation}
where $\lambda_{n_j} +$ = $\max \{|\lambda_{1,n_j}|,|\lambda_{2,n_j}|,\dots,|\lambda_{M,n_j}|\}$ is the spectral radius of $\mathbf{B}_{n_j}$ and $j\in \{1,2\}$. $\mathbf{B}_{n_j}$ is a block-companion matrix of size $M \times M$ and is defined as,
 \begin{equation}\label{B_sub1}
\mathbf{B}_{n_j}=
\begin{bmatrix}
  b_{1,n_j} & b_{2,n_j} & \dots &b_{M-1,n_j}& b_{M,n_j} \\
 1 & 0 & \dots & 0 & 0 \\
  0 & 1 & \dots & 0 & 0 \\
  \vdots & \vdots & \ddots & \vdots & \vdots\\
  0 & 0 & \dots & 1 & 0
\end{bmatrix}.
\end{equation}
To find the entries of the matrix $\mathbf{B}_{n_j}$, first we have to find the decoding error and successful decoding probabilities at $D_R$ in each queue model. According to the finite block length coding rate model \cite{polyanskiy2009dispersion}, the decoding error probability of the $m^{\text{th}}$ transmission attempt in direct-D2D mode ($\zeta^{d}_{m}(Z)$), mC-D2D mode ($\zeta^{^{mC}}_{m}(Z)$), and MC-D2D mode ($\zeta^{^{MC}}_{m}(Z)$) can be written as \cite{hu2020throughput}
\begin{subequations}\label{decoding_error_d}
\begin{align}
  \zeta^{d}_{m}(Z) &= Q \bigg( \frac{\sum_{k=1}^{m}\log_2(1+\gamma_d(k))+{\log(ml)/l} - r}{\log_2e\sqrt{\sum_{k=1}^{m}\frac{(2+\gamma_d(k))\gamma_d(k)}{l(\gamma_d(k) + 1)^2}}}\bigg)\\
  \zeta^{^{mC}}_{m}(Z) &= Q \bigg( \frac{\sum_{k=1}^{m}\log_2(1+\gamma_{_{mC}}(k))+{\log(ml)/l} - r}{\log_2e\sqrt{\sum_{k=1}^{m}\frac{(2+\gamma_{_{mC}}(k))\gamma_{_{mC}}(k)}{l(\gamma_{_{mC}}(k) + 1)^2}}}\bigg)\\
  \zeta^{^{MC}}_{m}(Z) &= Q \bigg( \frac{\sum_{k=1}^{m}\log_2(1+\gamma_{_{MC}}(k))+{\log(ml)/l} - r}{\log_2e\sqrt{\sum_{k=1}^{m}\frac{(2+\gamma_{_{MC}}(k))\gamma_{_{MC}}(k)}{l(\gamma_{_{MC}}(k) + 1)^2}}}\bigg).
\end{align}
\end{subequations}
Where $\gamma_d(k)$, $\gamma_{_{mC}}(k)$, and $\gamma_{_{MC}}(k)$ are the SNR of the direct-D2D, mC-D2D, and MC-D2D modes, respectively. Let us define $P_{t,\nu,n_j}$ as the probability of $\nu$, the number of removed packets from $D_T$'s queue, for the queue model, $j$, in time period, $t$. We know from the deadline constraint that $1\leq t \leq M$ and $\nu \in \{0,1\}$ (considering that only one packet is being transmitted in one transmission period).

\paragraph*{\textbf{Queue Model 1 $(n_1)$}} In $n_1$, $\nu = 0$ when outage occurs ($t=M$); therefore, we can say that $P_{t,0,n_1}$ is the probability that no successful decoding happens at $D_R$ when $M$ is reached. Contrarily, $P_{t,1,n_1}$ represents the probability that a transmission period ended successfully in the $t^{\text{th}}$ time block. From here, we have the following:
\begin{equation}\label{eq:P_t_0_m1}
  P_{t,0,n_1} =
  \begin{cases}
    \left.\begin{aligned}
        0, \quad &t<M\\
        \varepsilon_d,\quad &t=M
       \end{aligned}
 \;\right\}
  \quad \text{direct-D2D mode} \\
    \left.\begin{aligned}
        0, \quad &t<M\\
        \varepsilon_{_{mC}}, \quad &t=M
       \end{aligned}
 \;\right\}
  \quad \text{mC-D2D mode}\\
  \left.\begin{aligned}
        0, \quad &t<M\\
        \varepsilon_{_{MC}},\quad &t=M
       \end{aligned}
 \;\right\}
  \quad \text{MC-D2D mode}
  \end{cases}
\end{equation}
where $\varepsilon_d$, $\varepsilon_{_{mC}}$, and $\varepsilon_{_{MC}}$ are the outage probabilities in direct-D2D, mC-D2D, and MC-D2D modes, respectively. These outage probabilities can be defined as $\E_z[\zeta^{^{d}}_{M}]$, $ \E_z[\zeta^{^{mC}}_{M}]$, and $ \E_z[\zeta^{^{MC}}_{M}]$, respectively. The probability that $D_R$ successfully decodes the packet in $t^{\text{th}}$ time block is equal to the probability of $D_R$ decoding the packet within $t$ time blocks minus the probability of $D_R$ decoding the packet within $t-1$ time blocks. Therefore, $P_{t,1,n_1}$ can be defined as
\begin{equation}
	\label{eq:P_t_1_m1}
P_{t,1,n_1}=
	 \begin{cases}   \E_z[\zeta^{^{d}}_{t-1}]-\E_z[\zeta^{^{d}}_{t}], & \text{direct-D2D mode}\\
                     \E_z[\zeta^{^{mC}}_{t-1}]-\E_z[\zeta^{^{mC}}_{t}], & \text{mC-D2D mode}\\
                     \E_z[\zeta^{^{MC}}_{t-1}]-\E_z[\zeta^{^{MC}}_{t}], & \text{MC-D2D mode}
 \end{cases}.
\end{equation}

\paragraph*{\textbf{Queue Model 2 $(n_2)$}} In $n_2$, $\nu = 1$ due to the fact that a packet surely leaves $D_T$'s queue as each transmission period ends. It is either because of the successful decoding of the packet at $D_R$ or because of the packet dropped by $D_T$'s queue when $M$ is reached. In the $n_2$ model, $t<M$ corresponds to the successful transmission of the packet, as it also did in the $n_1$ model. In the $n_2$ model, $t=M$ corresponds to two cases. The first case is when $D_R$ decodes the packet successfully in the $M^{\text{th}}$ time block. The second case is when an outage occurs, consequently dropping the packet from $D_T$'s queue. Therefore, we have the following cases
\begin{equation}\label{eq:P_t_1_m2}
  P_{t,1,n_2} =
  \begin{cases}
    \left.\begin{aligned}
        \E_z[\zeta^{^{d}}_{t-1}]-\E_z[\zeta^{^{d}}_{t}], \quad &t<M\\
        \E_z[\zeta^{^{d}}_{M-1}],\quad &t=M
       \end{aligned}
 \;\right\}
  \quad \text{direct-D2D mode} \\
    \left.\begin{aligned}
        \E_z[\zeta^{^{mC}}_{t-1}]-\E_z[\zeta^{^{mC}}_{t}], \quad &t<M\\
        \E_z[\zeta^{^{mC}}_{M-1}], \quad &t=M
       \end{aligned}
 \;\right\}
  \quad \text{mC-D2D mode}\\
  \left.\begin{aligned}
        \E_z[\zeta^{^{MC}}_{t-1}]-\E_z[\zeta^{^{MC}}_{t}], \quad &t<M\\
        \E_z[\zeta^{^{MC}}_{M-1}],\quad &t=M
       \end{aligned}
 \;\right\}
  \quad \text{MC-D2D mode} \\
  \end{cases}
\end{equation}
For the case of $t=M$, we use $P_{t,1,n_2} = \E_z[\zeta_{M-1}]-\E_z[\zeta_M] + \varepsilon$, where $\E_z[\zeta_M]=\varepsilon$.

Now, to find the entries of the block companion matrix $\mathbf{B}_{n_j}$, we utilize the results from (\ref{eq:P_t_0_m1}), (\ref{eq:P_t_1_m1}), and (\ref{eq:P_t_1_m2}); consequently, we obtain the following
\begin{equation}
	\label{eq:b_k_nj}
b_{k,n_j}=
	 \begin{cases}   \mathbf{q}_{_{1}} \mathbf{\Phi}(-\theta)\mathbf{p}_{i}^{\intercal}, &k=1\\
                     \mathbf{q}_{_{2}} \mathbf{\Phi}(-\theta)\mathbf{p}_{i}^{\intercal}, &2\leq k \leq M-1 \\
                     \mathbf{q}_{_{3}} \mathbf{\Phi}(-\theta)\mathbf{p}_{i}^{\intercal}+\varepsilon_{_{ac}}, &k=M \; \text{and}\; j=1 \\
                     \mathbf{q}_{_{4}} \mathbf{\Phi}(-\theta)\mathbf{p}_{i}^{\intercal}, &k=M\; \text{and}\; j=2, \end{cases}
\end{equation}
where $\mathbf{q}_{_{1}} = \big[1-\E_z[\zeta^{^{d}}_{1}], 1, 1-\E_z[\zeta^{^{mC}}_{1}], 1, 1-\E_z[\zeta^{^{MC}}_{1}], 1\big]$, $\mathbf{q}_{_{2}} = \big[\E_z[\zeta^{^{d}}_{k-1}]-\E_z[\zeta^{^{d}}_{k}], 1, \E_z[\zeta^{^{mC}}_{k-1}]-\E_z[\zeta^{^{mC}}_{k}], 1, \E_z[\zeta^{^{MC}}_{k-1}]-\E_z[\zeta^{^{MC}}_{k}], 1\big]$, $\mathbf{q}_{_{3}} = \big[\E_z[\zeta^{^{d}}_{M-1}]-\varepsilon_d, 1, \E_z[\zeta^{^{mC}}_{M-1}]-\varepsilon_{_{mC}}, 1, \E_z[\zeta^{^{MC}}_{M-1}]-\varepsilon_{_{MC}}, 1\big]$, and $\mathbf{q}_{_{4}} = \big[\E_z[\zeta^{^{d}}_{M-1}], 1, \E_z[\zeta^{^{mC}}_{M-1}], 1, \E_z[\zeta^{^{MC}}_{M-1}], 1\big]$. %Note that, in our case transmission only occurs in ON states ($s_1$, $s_3$, and $s_5$), and in OFF states ($s_2$, $s_4$, and $s_6$) no transmission occurs. Therefore, probability of the packet removed from the transmitter's queue is zero in OFF states ($q_{12} = q_{14} = q_{16} = q_{22} = q_{24} = q_{26} = q_{32} = q_{34} = q_{36} = q_{42} = q_{44} = q_{46}= 0$).
$\varepsilon_{_{ac}} = \varepsilon_d+\varepsilon_{_{mC}}+\varepsilon_{_{MC}}$ is the accumulative outage probability, $\mathbf{p}_i$ is a vector containing all the state transition probabilities (due to unit rank), $\mathbf{p}_i = [p_1, p_2, p_3, p_4, p_5, p_6]$, and $\mathbf{\Phi}(\theta)$ is the diagonal matrix containing the MGF of the processes in the six states ($s_1$, $s_2$, $s_3$, $s_4$, $s_5$, $s_6$). Because $S(k) = r$ for states $s_1$, $s_3$, and $s_5$ (ON states) and $S(k) = 0$ for states $s_2$, $s_4$, and $s_6$ (OFF states). Therefore, the MGFs for states $s_1$, $s_2$, $s_3$, $s_4$, $s_5$, and $s_6$ become $e^{lr\theta}$, 1, $e^{lr\theta}$, 1, $e^{lr\theta}$, and 1, respectively. Thus, $\mathbf{\Phi}(-\theta)$ can be expressed as $\mathbf{\Phi}(-\theta) = \text{diag}[e^{-lr\theta}, 1, e^{-lr\theta}, 1, e^{-lr\theta}, 1]$. By substituting these values in \eqref{eq:b_k_nj} and by setting a limit on the packet retransmissions, one can find entries of the block companion matrix $B_{n_j}$. Further, by calculating the spectral radius of $B_{n_j}$, one can find the EC of HARQ-enabled D2D communication for both queue models. Next, we investigate a special case of HARQ by adjusting the retransmission limit to 2, and provide its statistical QoS analysis.
\begin{figure}[ht]
\begin{center}
	\includegraphics[width=3in]{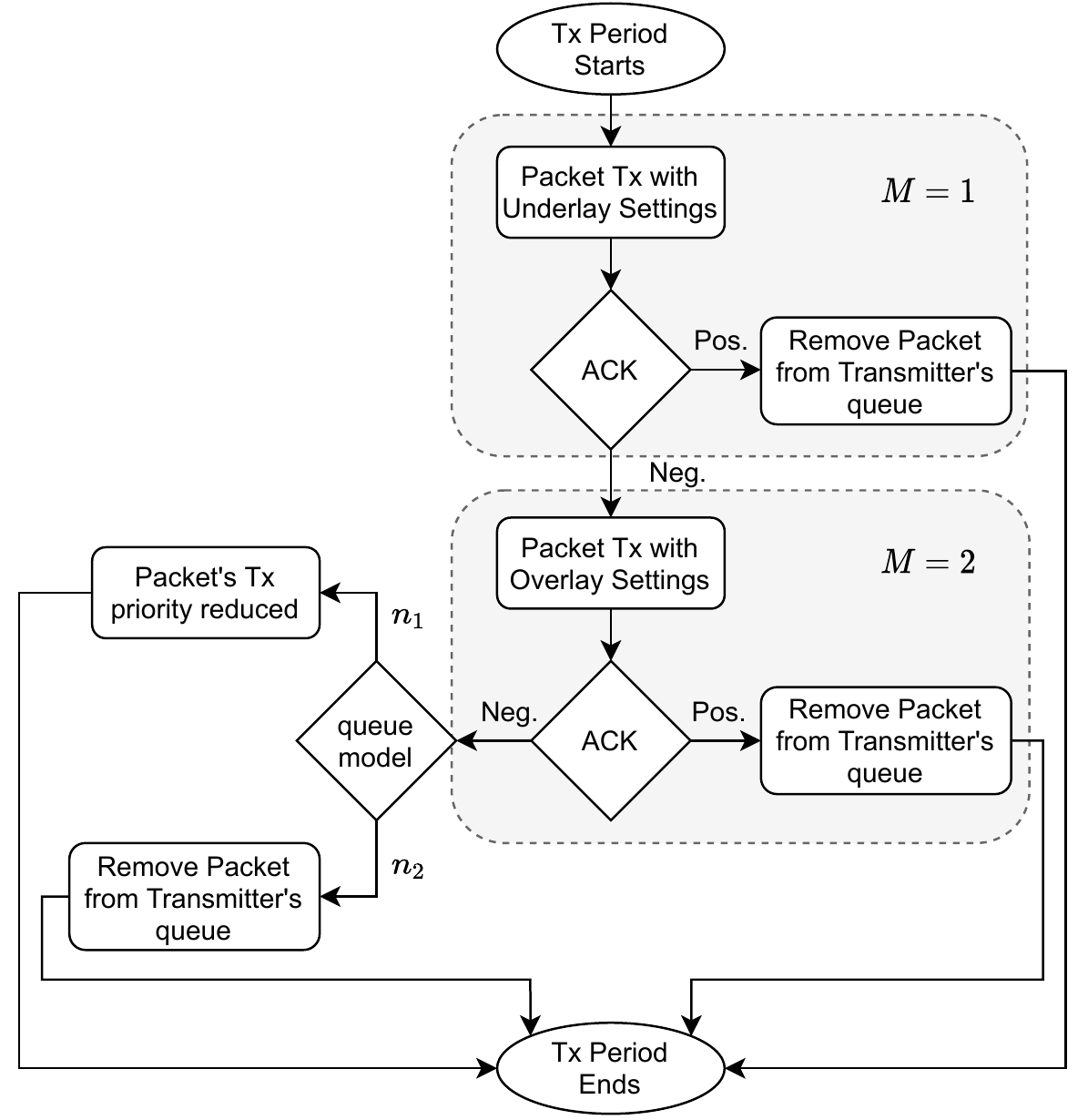}
\caption{Flow diagram of truncated HARQ-enabled D2D communication.}
\label{flow_chart}
\end{center}
\end{figure}

\subsection{Effective Capacity of Truncated HARQ-enabled D2D}
In this subsection, we discuss a special case of HARQ (truncated HARQ \cite{malkamaki2000performance}) and also provide closed-form expression for the EC of truncated HARQ-enabled D2D communication. We restrict the maximum number of packet transmissions in a transmission period to its lowest value, which is $M=2$. In this case, $D_T$ first transmits a packet using underlay settings by reusing the cellular user's channel. If the packet fails to be decoded at $D_R$, then the packet is retransmitted using overlay settings in the same transmission period, as shown in Fig. \ref{flow_chart}. This way, we can achieve higher reliability by utilizing less network resources. For $M=2$, the block companion matrix $\mathbf{B}_{n_j}$ would become
\begin{equation}\label{B_special}
\mathbf{B}_{n_j}=
\begin{bmatrix}
  b_{1,n_j} & b_{2,n_j}\\
 1 & 0
\end{bmatrix},
\end{equation}
and the corresponding characteristic equation is $\lambda_{n_j}^2 - \lambda_{n_j}(b_{1,n_j})-b_{2,n_j}=0$, with the largest positive root for queue model ${n_j}$
\begin{equation}\label{lambda_n1}
  \lambda_{n_j}+ = \frac{1}{2}\bigg(b_{1,n_j}+\sqrt{(b_{1,n_j})^2+4(b_{2,n_j})}\bigg).
\end{equation}

Now, to find the EC expressions for queue model $n_1$ and $n_2$, we have to find the largest positive roots of the corresponding block companion matrices. For the largest positive root for queue model $n_1$ ($\lambda_{n_1}+$), we have the following Lemma \ref{lem1}.

\begin{lemma}\label{lem1}
The largest positive root of the block companion matrix for queue model $n_1$ is given as,
\begin{equation*}\label{lambda_n1_final}
\begin{split}
  \lambda_{n_1}+ &= \frac{1}{2}\biggl(\big(e^{-lr\theta}\big[\varphi\big]+p^{\text{off}}_{u}\big)+\\
  &\sqrt{\big(e^{-lr\theta}\big[\varphi\big]+p^{\text{off}}_{u}\big)^2+4\big(e^{-lr\theta}\big[\vartheta\big]+p^{\text{off}}_{o}+\varepsilon_{_{ac}}\big)}\biggl).
\end{split}
\end{equation*}
Where $\varphi = p_{u,1}\big(\alpha_d\big)+p_{u,3}\big(\alpha_{_{mC}}\big)+p_{u,5}\big(\alpha_{_{MC}}\big)$, $\vartheta = p_{o,1}\big(\beta_d\big)+p_{o,3}\big(\beta_{_{mC}}\big)+p_{o,5}\big(\beta_{_{MC}}\big)$, $p^{\text{off}}_{u} = p_{u,2}+p_{u,4}+p_{u,6}$, and $p^{\text{off}}_{o} = p_{o,2}+p_{o,4}+p_{o,6}$
\end{lemma}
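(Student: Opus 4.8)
The plan is to specialise the general entries of the block-companion matrix in \eqref{eq:b_k_nj} to the truncated case $M=2$, to compute the two scalars $b_{1,n_1}$ and $b_{2,n_1}$ explicitly, and to substitute them into the closed-form largest root \eqref{lambda_n1}. Since \eqref{lambda_n1} already expresses $\lambda_{n_j}+$ in terms of $b_{1,n_j}$ and $b_{2,n_j}$, the whole lemma reduces to establishing the two identities $b_{1,n_1}=e^{-lr\theta}[\varphi]+p^{\text{off}}_{u}$ and $b_{2,n_1}=e^{-lr\theta}[\vartheta]+p^{\text{off}}_{o}+\varepsilon_{_{ac}}$, after which the square-root formula follows immediately.

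The conceptual crux that governs the whole calculation is the two-phase structure of the truncated-HARQ protocol of Fig.~\ref{flow_chart}: the first transmission is sent in underlay settings and the single retransmission is sent in overlay settings. Consequently the $k=1$ branch of \eqref{eq:b_k_nj} must be evaluated with the underlay transition vector $\mathbf{p}_{u,i}=[p_{u,1},\dots,p_{u,6}]$, while the $k=M=2$, $j=1$ branch must be evaluated with the overlay vector $\mathbf{p}_{o,i}=[p_{o,1},\dots,p_{o,6}]$. Pairing the correct phase---and hence the correct $\mathbf{p}$---with each matrix entry is the main step; the rest is algebraic bookkeeping.

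First I would form $b_{1,n_1}=\mathbf{q}_{_{1}}\,\mathbf{\Phi}(-\theta)\,\mathbf{p}_{u,i}^{\intercal}$. Because $\mathbf{\Phi}(-\theta)=\text{diag}[e^{-lr\theta},1,e^{-lr\theta},1,e^{-lr\theta},1]$ weights the three ON-state entries of $\mathbf{q}_{_{1}}$ by $e^{-lr\theta}$ and leaves the three OFF-state entries (all equal to $1$) unweighted, the inner product splits into an ON part and an OFF part. The OFF part collapses to $p_{u,2}+p_{u,4}+p_{u,6}=p^{\text{off}}_{u}$, while introducing the first-attempt success probabilities $\alpha_d=1-\E_z[\zeta^{^{d}}_{1}]$, $\alpha_{_{mC}}=1-\E_z[\zeta^{^{mC}}_{1}]$, $\alpha_{_{MC}}=1-\E_z[\zeta^{^{MC}}_{1}]$ turns the ON part into $e^{-lr\theta}\big[p_{u,1}(\alpha_d)+p_{u,3}(\alpha_{_{mC}})+p_{u,5}(\alpha_{_{MC}})\big]=e^{-lr\theta}[\varphi]$, which is the first identity.

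Next I would form $b_{2,n_1}=\mathbf{q}_{_{3}}\,\mathbf{\Phi}(-\theta)\,\mathbf{p}_{o,i}^{\intercal}+\varepsilon_{_{ac}}$, now with $M-1=1$ so that the three ON-state entries of $\mathbf{q}_{_{3}}$ read $\E_z[\zeta^{^{d}}_{1}]-\varepsilon_d$, $\E_z[\zeta^{^{mC}}_{1}]-\varepsilon_{_{mC}}$, $\E_z[\zeta^{^{MC}}_{1}]-\varepsilon_{_{MC}}$. The identical ON/OFF split, with $\beta_d=\E_z[\zeta^{^{d}}_{1}]-\varepsilon_d$, $\beta_{_{mC}}=\E_z[\zeta^{^{mC}}_{1}]-\varepsilon_{_{mC}}$, $\beta_{_{MC}}=\E_z[\zeta^{^{MC}}_{1}]-\varepsilon_{_{MC}}$, produces $e^{-lr\theta}[\vartheta]+p^{\text{off}}_{o}$, and the extra $+\varepsilon_{_{ac}}$ carried by the $k=M,\,j=1$ branch completes the second identity. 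Substituting both scalars into \eqref{lambda_n1} yields the stated $\lambda_{n_1}+$. The only points demanding care are the underlay/overlay pairing flagged above and the $\varepsilon_{_{ac}}$ term, which is specific to the $n_1$ (priority-lowering) model and accounts for the packet that is retained rather than dropped when an outage occurs.
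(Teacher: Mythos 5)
Your proposal is correct and follows essentially the same route as the paper's own proof: specialise \eqref{eq:b_k_nj} to $M=2$, evaluate $b_{1,n_1}=\mathbf{q}_{_{1}}\mathbf{\Phi}(-\theta)\mathbf{p}_{u,i}^{\intercal}$ and $b_{2,n_1}=\mathbf{q}_{_{3}}\mathbf{\Phi}(-\theta)\mathbf{p}_{o,i}^{\intercal}+\varepsilon_{_{ac}}$ via the ON/OFF split, and substitute into \eqref{lambda_n1}. The only detail worth adding is the one the paper makes explicit: the first-attempt decoding-error expectations inside $\mathbf{q}_{_{1}}$ must be evaluated under the underlay SINR statistics ($\Gamma$'s, giving $\E_z[\zeta^{^{d}}_{u,1}]$, etc.), while those inside $\mathbf{q}_{_{3}}$ and the outage terms use the overlay SNR statistics ($\gamma$'s, giving $\E_z[\zeta^{^{d}}_{o,1}]$, etc.), so the phase pairing applies to the $\mathbf{q}$ vectors as well as to the $\mathbf{p}$ vectors.
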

\begin{proof}
Given in Appendix \ref{prop2}.
\end{proof}

By using results from Lemma \ref{lem1} and solving \eqref{EC_main}, we can find the closed-form expression for the EC of truncated HARQ-enabled D2D for queue model $n_1$, which is
\begin{equation}\label{EC_n1_final}
\begin{split}
  EC^{^\text{HARQ}}_{n_1} &= \frac{-1}{\theta} \log_e \bigg\{\frac{1}{2}\biggl(\big(e^{-lr\theta}\big[\varphi\big]+p^{\text{off}}_{u}\big)+\\
  &\sqrt{\big(e^{-lr\theta}\big[\varphi\big]+p^{\text{off}}_{u}\big)^2+4\big(e^{-lr\theta}\big[\vartheta\big]+p^{\text{off}}_{o}+\varepsilon_{_{ac}}\big)}\biggl)\bigg\}.
\end{split}
\end{equation}

Similarly, for queue model $n_2$, the expression for the largest positive root ($\lambda_{n_2}+$) can be found by using the following Lemma \ref{lem2}.
\begin{lemma}\label{lem2}
The largest positive root of the block companion matrix for queue model $n_2$ is given as,
\begin{equation*}\label{lambda_n2_final}
\begin{split}
 \lambda_{n_2}+ = \frac{1}{2}\bigg(\big(e^{-lr\theta}&\big[\varphi\big]+p^{\text{off}}_{u}\big) +\\
  &\sqrt{\big(e^{-lr\theta}\big[\varphi\big]+p^{\text{off}}_{u}\big)^2 +4e^{-lr\theta}\big[\varrho\big]+p^{\text{off}}_{o}}\bigg).
\end{split}
\end{equation*}
Where $\varrho = p_{o,1}\E_z[\zeta^{^{d}}_{o,1}]+p_{o,3}\E_z[\zeta^{^{mC}}_{o,1}]+p_{o,5}\E_z[\zeta^{^{MC}}_{o,1}]$.
\end{lemma}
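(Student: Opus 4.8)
The plan is to specialize the general entry formula \eqref{eq:b_k_nj} to $M=2$ and to queue model $j=2$, evaluate the two scalars $b_{1,n_2}$ and $b_{2,n_2}$ explicitly, and substitute them into the closed-form root \eqref{lambda_n1} of the $2\times2$ block companion matrix \eqref{B_special}. The argument runs exactly parallel to the proof of Lemma~\ref{lem1} in Appendix~\ref{prop2}; the only structural changes are that (i) the first transmission is carried out in underlay settings and the retransmission in overlay settings, so the underlay row vector $\mathbf{p}_{u}$ multiplies the $k=1$ entry while the overlay row vector $\mathbf{p}_{o}$ multiplies the $k=M$ entry, and (ii) for $j=2$ the terminal entry is assembled from $\mathbf{q}_{4}$ rather than from $\mathbf{q}_{3}$ and carries no additive $\varepsilon_{_{ac}}$ contribution.

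First I would evaluate the case $k=1$ of \eqref{eq:b_k_nj}, namely $b_{1,n_2}=\mathbf{q}_{1}\mathbf{\Phi}(-\theta)\mathbf{p}_{u}^{\intercal}$. Since $\mathbf{\Phi}(-\theta)=\text{diag}[e^{-lr\theta},1,e^{-lr\theta},1,e^{-lr\theta},1]$, the ON-state components of $\mathbf{q}_{1}$ (namely $1-\E_z[\zeta^{^{d}}_{1}]$, $1-\E_z[\zeta^{^{mC}}_{1}]$, $1-\E_z[\zeta^{^{MC}}_{1}]$) each acquire the factor $e^{-lr\theta}$ and pair with $p_{u,1},p_{u,3},p_{u,5}$, whereas the OFF-state components (all unity) pair with $p_{u,2},p_{u,4},p_{u,6}$. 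Grouping the ON and OFF contributions gives $b_{1,n_2}=e^{-lr\theta}[\varphi]+p^{\text{off}}_{u}$, with $\varphi$ and $p^{\text{off}}_{u}$ exactly as defined in Lemma~\ref{lem1}; this is as expected, since the first companion entry is independent of $j$.

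Next I would compute the terminal entry. With $M=2$ and $j=2$, the case $k=M$, $j=2$ of \eqref{eq:b_k_nj} applies, so $b_{2,n_2}=b_{M,n_2}=\mathbf{q}_{4}\mathbf{\Phi}(-\theta)\mathbf{p}_{o}^{\intercal}$, where $\mathbf{q}_{4}$ is evaluated at $M-1=1$ in the overlay mode of the retransmission, i.e.\ its ON-state components are $\E_z[\zeta^{^{d}}_{o,1}],\E_z[\zeta^{^{mC}}_{o,1}],\E_z[\zeta^{^{MC}}_{o,1}]$. Applying $\mathbf{\Phi}(-\theta)$ and contracting with $\mathbf{p}_{o}$ in the same way yields $b_{2,n_2}=e^{-lr\theta}[\varrho]+p^{\text{off}}_{o}$, with $\varrho=p_{o,1}\E_z[\zeta^{^{d}}_{o,1}]+p_{o,3}\E_z[\zeta^{^{mC}}_{o,1}]+p_{o,5}\E_z[\zeta^{^{MC}}_{o,1}]$ and $p^{\text{off}}_{o}=p_{o,2}+p_{o,4}+p_{o,6}$. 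The vanishing of the $\varepsilon_{_{ac}}$ term distinguishes $n_2$ from $n_1$: under $n_2$ a packet that is not decoded within $M$ attempts is discarded rather than re-queued, so the accumulated outage probability does not re-enter the terminal companion entry (this matches the identity $P_{M,1,n_2}=\E_z[\zeta_{M-1}]-\varepsilon+\varepsilon=\E_z[\zeta_{M-1}]$ used in \eqref{eq:P_t_1_m2}).

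Finally I would substitute $b_{1,n_2}$ and $b_{2,n_2}$ into \eqref{lambda_n1}, which is the larger root of the characteristic equation $\lambda_{n_2}^{2}-\lambda_{n_2}b_{1,n_2}-b_{2,n_2}=0$ and hence equals the spectral radius of $\mathbf{B}_{n_2}$; this reproduces the claimed expression for $\lambda_{n_2}+$. I anticipate no genuine analytical difficulty, since the quadratic is elementary and the matrix is fixed. The one place demanding care---and the likeliest source of a slip---is the bookkeeping of which transition vector and which $\mathbf{q}$ vector attach to each attempt (underlay with $\mathbf{q}_{1}$ for the first transmission, overlay with $\mathbf{q}_{4}$ for the retransmission), together with the confirmation that omitting $\varepsilon_{_{ac}}$ is the correct $j=2$ specialization of \eqref{eq:b_k_nj}.
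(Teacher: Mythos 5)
Your proposal is correct and takes essentially the same route as the paper's own proof in Appendix~\ref{prop3}: it computes $b_{1,n_2}=\mathbf{q}_{_{1}}\mathbf{\Phi}(-\theta)\mathbf{p}_{u,i}^{\intercal}=e^{-lr\theta}\big[\varphi\big]+p^{\text{off}}_{u}$ (which the paper obtains by noting $b_{1,n_2}=b_{1,n_1}$, since the first underlay attempt is common to both queue models), then $b_{2,n_2}=\mathbf{q}_{_{4}}\mathbf{\Phi}(-\theta)\mathbf{p}_{o,i}^{\intercal}=e^{-lr\theta}\big[\varrho\big]+p^{\text{off}}_{o}$ with no $\varepsilon_{_{ac}}$ term, and substitutes both into the quadratic root formula \eqref{lambda_n1}, exactly as the paper does. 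The only caveat is cosmetic: a literal substitution places $4\big(e^{-lr\theta}\big[\varrho\big]+p^{\text{off}}_{o}\big)$ under the radical, so the factor $4$ in the lemma's displayed expression should be read as multiplying both terms --- a parenthesization slip that appears in the paper's own statement and proof as well, not a flaw in your derivation.
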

\begin{proof}
Given in Appendix \ref{prop3}.
\end{proof}

Now, by using results from Lemma \ref{lem2} and solving \eqref{EC_main}, we can find the closed-form expression for the EC of truncated HARQ-enabled D2D for the queue model $n_2$, which is
\begin{equation}\label{EC_n2_final}
  \begin{split}
  EC^{^\text{HARQ}}_{n_2} &= \frac{-1}{\theta} \log_e \bigg\{\frac{1}{2}\bigg(\big(e^{-lr\theta}\big[\varphi\big]+p^{\text{off}}_{u}\big) +\\
  &\sqrt{\big(e^{-lr\theta}\big[\varphi\big]+p^{\text{off}}_{u}\big)^2 +4e^{-lr\theta}\big[\varrho\big]+p^{\text{off}}_{o}}\bigg)\bigg\}.
\end{split}
\end{equation}

We provide numerical investigation and insights of these EC expressions for both of the queue models in Section V.
\subsection{Optimal Transmission Rate}
As discussed above, we assume that CSIT is not available; therefore, the transmitting device sends data using a fixed transmission rate. To achieve the maximum EC, it is essential to transmit data using an optimal transmission rate. Therefore, in this section, we find the optimized transmission rates for $n_1$ and $n_2$ models that maximize the EC in respective queue models. These optimal transmission rates can be written as $r_{n_j}^{*} = \arg \max_{r_{n_j}>0}EC^{^{\text{HARQ}}}_{n_j}$. For $n_1$ model, it becomes
\begin{equation}\label{optimized_r_n1_first}
\begin{split}
  r_{n_1}^{*} &= \arg \max_{r_{n_1}>0}\frac{-1}{\theta} \log_e \bigg\{\frac{1}{2}\biggl(\big(e^{-lr_{n_1}\theta}\big[\varphi\big]+p^{\text{off}}_{u}\big)+\\
  &\sqrt{\big(e^{-lr_{n_1}\theta}\big[\varphi\big]+p^{\text{off}}_{u}\big)^2+4\big(e^{-lr_{n_1}\theta}\big[\vartheta\big]+p^{\text{off}}_{o}+\varepsilon_{_{ac}}\big)}\biggl)\bigg\}.
\end{split}
\end{equation}
Equivalently, we can write
\begin{equation}\label{optimized_r_n1_first}
\begin{split}
  r_{n_1}^{*} &= \arg \min_{r_{n_1}>0}\bigg\{\big(e^{-lr_{n_1}\theta}\big[\varphi\big]+p^{\text{off}}_{u}\big)+\\
  &\sqrt{\big(e^{-lr_{n_1}\theta}\big[\varphi\big]+p^{\text{off}}_{u}\big)^2+4\big(e^{-lr_{n_1}\theta}\big[\vartheta\big]+p^{\text{off}}_{o}+\varepsilon_{_{ac}}\big)}\bigg\}.
\end{split}
\end{equation}
From Table \ref{states}, we can see that the transmission is only possible in states $s_1$, $s_3$, and $s_5$ and that no transmission occurs during states $s_2$, $s_4$, and $s_6$. Therefore, the transmission probabilities $p_2$, $p_4$, and $p_6$, in both overlay and underlay scenarios, are irrelevant when optimizing (\ref{optimized_r_n1_first}) with respect to $r_{n_1}$. By discarding the irrelevant terms, the final optimization problem becomes
\begin{equation}\label{optimized_r_n1_final}
\begin{split}
  r_{n_1}^{*} = \arg \min_{r_{n_1}>0} \bigg\{&e^{-lr_{n_1}\theta}[\varphi] +\\
  &\sqrt{\big(e^{-lr_{n_1}\theta}[\varphi]\big)^2+4\big(e^{-lr_{n_1}\theta}[\vartheta]+\varepsilon_{_{ac}}\big)}\bigg\}.
\end{split}
\end{equation}
Let $F = e^{-lr_{n_1}\theta}[\varphi] + \sqrt{(e^{-lr_{n_1}\theta}[\varphi])^2+4(e^{-lr_{n_1}\theta}[\vartheta]+\varepsilon_{_{ac}})}$ be the cost function. Because $F$ is a convex function \cite{brychkov2012some}, we can find its closed-form by taking the derivative with respect to $r_{n_1}$. By taking the derivative of $F$ and by employing the chain rule and the sum/difference rule, we obtain the following result
\begin{equation}\label{gradient}
\frac{\partial F}{\partial r_{n_1}} = -l\theta e^{-lr_{n_1}\theta}[\varphi]-\frac{l\theta e^{-2lr_{n_1}\theta}([\varphi]+2[\vartheta]e^{lr_{n_1}\theta})}{\sqrt{(e^{-lr_{n_1}\theta}[\varphi])^2+4(e^{-lr_{n_1}\theta}[\vartheta]+\varepsilon_{_{ac}})}}.
\end{equation}
Now, to find the closed-form expression, we set $\frac{\partial F}{\partial r_{n_1}} = 0$. Consequently, we obtain
\begin{equation}\label{closed_form}
 l\theta e^{-lr_{n_1}\theta}[\varphi]= -\frac{l\theta e^{-2lr_{n_1}\theta}([\varphi]+2[\vartheta]e^{lr_{n_1}\theta})}{\sqrt{(e^{-lr_{n_1}\theta}[\varphi])^2+4(e^{-lr_{n_1}\theta}[\vartheta]+\varepsilon_{_{ac}})}}.
\end{equation}
Solving \eqref{closed_form} for $r_{n_1}$ requires a great deal of computation, and the computational complexity of the solution is very high. Therefore, we employ the iterative gradient decent (GD) method to determine the optimal transmission rate $r_{n_1}^*$. To control the convergence of the GD method, we have the following rule
\begin{equation}
    r_{n_1}(x) = r_{n_1}(x-1)-\Omega \nabla\big|_{r_{n_1}(x)},
\end{equation}
where $\Omega$ is the step-size, $x$ is the number of the iteration, and $\nabla$ is the gradient of $F$. This gradient can be written as $\nabla =\frac{\partial F}{\partial r_{n_1}}$ and is given in \eqref{gradient}.

Similarly, for $n_2$ model, the optimized transmission rate can be calculated using the following expression:
\begin{equation}\label{optimized_r_n2_final}
  r_{n_2}^{*} = \arg \min_{r_{n_2}>0} \bigg(e^{-lr_{n_2}\theta}\big[\varphi\big] + \sqrt{\big(e^{-lr_{n_2}\theta}\big[\varphi\big]\big)^2+4\big(e^{-lr_{n_2}\theta}\big[\varrho\big]\big)}\bigg).
\end{equation}
To solve \eqref{optimized_r_n2_final} and to find the optimal value of $r_{n_2}$, one can follow the same procedure used for $n_1$ model.
\begin{remark}
In our system model, the MC-BS performs the mode selection mechanism (to find the best mode for D2D communication) and executes the GD algorithm (to compute the optimal yet fixed transmission rates). The MC-BS then communicates the outcome of the mode selection and the optimal transmission rate to $D_T$ through the downlink control channel. Moreover, the mode selection and the optimal transmission rates have to be recomputed every time pathloss of the D2D link changes (due to the D2D users' mobility). The MC-BS performs these tasks because we assume that it has adequate resources to execute the GD algorithm. It also keeps track of the D2D users' mobility to decide when to recompute the optimal transmission rates.
\end{remark}
\section{Numerical Results}
In this section, we further investigate the EC of HARQ-enabled D2D communication and the impact of mode selection on the performance of the D2D link, and we provide simulation results to support our analysis.
\subsection{Simulation Setup}
We consider an MC of radius 500 m and an mC of radius 100 m in the MC's coverage area. Two pairs of user equipment are positioned in the coverage area of the mC using uniform distribution. One pair is referred to as the D2D pair ($D_T$ and $D_R$) and the other as the cellular user pair ($U_T$ and $U_R$). We use the pathloss as a sole-feature for mode selection, with the following pathloss model \cite{shah2019system}: L($d$)=128.1+37.6$\log_{10}(d)$. We use power class 1 devices at the transmitter and receiver, with their average transmit power set to be 27 dBm. The average transmit powers of MC-BS and mC-BS are 47 dBm and 37 dBm, respectively. We assume that the channels $D_T \to D_R$, $D_T \to BS_{_{mC}} \to D_R$, and $D_T \to BS_{_{MC}} \to D_R$ are Rayleigh fading channels and follow independent distributions. %Further, we consider equal prior probabilities for mode selection ($\pi_0 = \pi_1 = \pi_2$).
\subsection{Simulation Results}
\begin{figure}[ht]
\begin{center}
	\includegraphics[width=3in]{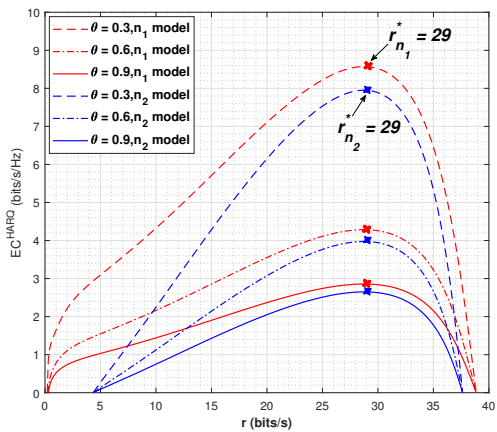}
\caption{$EC^{^{\text{HARQ}}}$ is a quasi-concave function of $r$; an exhaustive search to find the optimal $r$ for different values of QoS exponent $\theta$ ($M=2$).}
\label{EC_vs_r}
\end{center}
\end{figure}
Fig. \ref{EC_vs_r} presents a comprehensive search to determine the optimal value of the fixed transmission rate with a constant arrival rate. It can be seen that the EC of truncated HARQ-enabled D2D is a quasi-concave function of $r$ and that a globally optimal value of $r$ ($r_{n_1}^* = r_{n_2}^* = 29$) exists that maximizes the EC. This is because $r$ introduces a significant outage probability when it is too large. Consequently, a large amount of packet drop happens due to the deadline constraint. On the other hand, when $r$ is too small, it forces the departure rate low as well. In short, for large $r$, the decoding error probability is the bottleneck, and for small $r$, low departure rate is the bottleneck. From the figure, we can also see the impact of using different queue models. For instance, the $n_1$ queue model provides a higher EC on the optimal value of $r$ than does the $n_2$ queue model. This is because the unsuccessful packet is discarded when a deadline is approached in the $n_2$ model. On the other hand, in the $n_1$ model, packet transmission priority is reduced, rather than discarded, when it remains unsuccessful, even after the deadline is reached. Moreover, one can also see the impact of imposing strict QoS constraints on the EC; for instance, a lower EC is achieved at the optimal $r$ when stricter QoS constraints are imposed at $D_T$'s queue.

\begin{figure}[ht]
\begin{center}
	\includegraphics[width=3in]{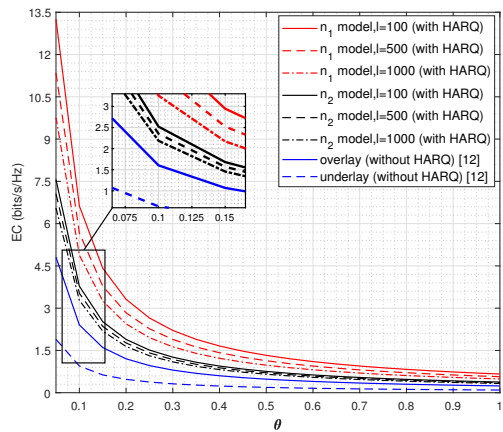}
\caption{The EC vs the QoS exponent $\theta$: a comparison of HARQ-enabled D2D communication with traditional D2D communication.}
\label{EC_vs_theta}
\end{center}
\end{figure}
Next, we investigate the effect of the QoS exponent on the EC of our proposed system model. Fig. \ref{EC_vs_theta} shows that the EC is a decreasing function of $\theta$. Specifically, the EC decreases exponentially fast for lower values of $\theta$. For higher values of $\theta$, this rate of decrease slows down and ultimately reaches zero when $\theta$ approaches 1. It also shows that our proposed scheme of truncated HARQ-enabled D2D outperforms other D2D schemes, such as overlay and underlay D2D. However, this gain over other D2D schemes decreases as stricter QoS constraints are imposed at $D_T$'s queue. Moreover, we also observe a significant performance loss when the finite blocklength ($l$) increases. This is because we consider a block-fading channel model; in such models, when the length of the fading block increases, the effect of slow-fading plays an important role. This occurs because slow-fading makes a strong attenuation last for a long time in delay-sensitive networks operating under statistical QoS constraints. This attenuation then causes an increase in the buffer overflow probability, which affects the performance of the system and results in reduced EC. Additionally, the results also show that the $n_1$ model with a large blocklength ($l=1000$) still outperforms the $n_2$ model with a small blocklength ($l=100$). It shows the efficacy of the $n_1$ model over the $n_2$ model in terms of performance but at the cost of more resources.\footnote{Note that the $n_1$ model requires comparatively more resources than the $n_2$ model because in the $n_1$ model, a packet is not discarded even after the retransmission deadline is reached, whereas in the $n_2$ model, a packet is discarded after the retransmission deadline is reached (which in our case occurs after two unsuccessful attempts). This phenomenon poses an extra burden on the resources available for D2D communication.}

\begin{figure}[ht]
\begin{center}
	\includegraphics[width=3in]{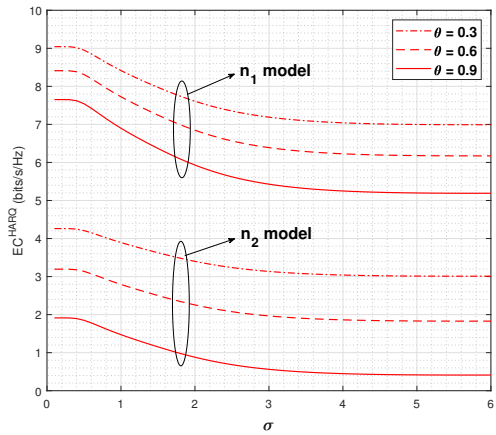}
\caption{Impact of the mode selection mechanism on the EC of HARQ-enabled D2D system: The EC of truncated HARQ-enabled D2D vs the standard deviation of the estimation error for $n_1$ and $n_2$ queue models.}
\label{EC_vs_sigma}
\end{center}
\end{figure}
Fig. \ref{EC_vs_sigma} presents the impact of our proposed mode selection on the EC of truncated HARQ-enabled D2D communication. The EC decreases initially with an increase in the standard deviation of the estimation error ($\sigma$) of pathloss measurements, and it becomes stable for $\sigma \geq 5$. This occurs because the EC decreases as the quality of the pathloss estimation decreases. This trend shows a strong impact of the proposed mode selection on the EC of the truncated HARQ-enabled D2D communication. Additionally, we observe that the impact of the quality of the pathloss estimation is significantly higher when strict QoS constraints are imposed and when the $n_1$ queue model is used. We also observe that although the $n_1$ model provides better EC, the impact of the quality of pathloss estimation is higher on the $n_1$ model compared to the $n_2$ model.

\begin{figure}[ht]
\begin{center}
	\includegraphics[width=3in]{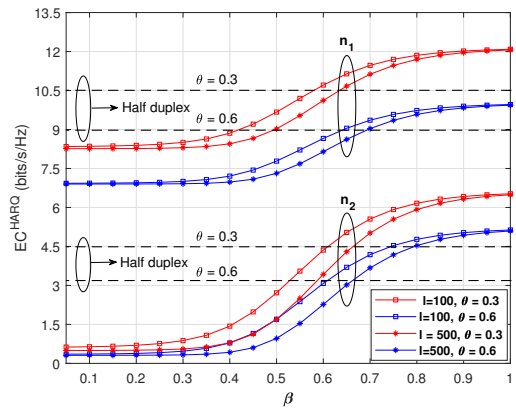}
\caption{Impact of half-duplex and full-duplex relaying on the EC of HARQ-enabled D2D communication: EC of truncated HARQ-enabled D2D vs the quality of the SI cancellation techniques.}
\label{EC_vs_beta}
\end{center}
\end{figure}
Last but not the least, we investigate the impact of half-duplex and full-duplex relaying (in mC-D2D and MC-D2D modes) on the EC of the truncated HARQ-enabled D2D communication, as shown in Fig. \ref{EC_vs_beta}. We observe that the EC increases with an increase in the quality of SI cancellation techniques ($\beta$). When $\beta$ approaches 1, it means perfect SI cancellation at the relay node (mC-BS and MC-BS), and consequently, the EC of full-duplex becomes greater than the EC of half-duplex. It is because D2D communication in half-duplex mode consumes two time-slots, and therefore, a factor of 1/2 is multiplied with the half-duplex channel capacity. On the other hand, D2D communication in full-duplex mode utilizes only one time-slot, and that is why it can achieve double throughput (theoretically) with perfect SI cancellation.  Moreover, one can also see the impact of the QoS exponent ($\theta$) and the length of the finite blocklength ($l$) on the EC of full-duplex truncated HARQ-enabled D2D communication. The EC is inversely proportional to $\theta$ and $l$; it decreases with an increase in $\theta$ and $l$ and vice-versa.
\section{Conclusion and Future Directions}
In this work, we have investigated the effects of using the HARQ protocol on the EC of buffer-aided D2D communication in multi-tier cellular networks. We have also performed the ternary hypothesis testing-based mode selection for D2D in two-tier cellular networks and have analyzed its impact on the EC of HARQ-enabled D2D communication. We have considered two different queue models at the transmitting device. In case of an outage, the transmitting device in the second model discards the packet. Whereas, in the first model, the transmitting device reduces the packet's priority rather than discarding it. We have also extended our analysis to both overlay and underlay D2D settings. Additionally, we have proposed a special case of truncated HARQ for D2D communication in which the transmitting device transmits in underlay settings in the first transmission attempt. If the receiver does not successfully decode the packet, it retransmits the packet in overlay settings in the second transmission attempt. Through simulation results, we have observed that almost three-fold enhanced EC can be achieved by using our proposed truncated HARQ protocol than by not using any retransmission protocol for D2D communication. Moreover, the first queue model provides better EC compared to the second queue model but at the expense of extra bandwidth.

Future work will study the impact of different HARQ variants on the EC of D2D communication. Moreover, this analysis can also be extended to scenarios when multiple D2D pairs are present in the network. In that case, it will be quite intriguing to investigate the impact of network and channel coding on the HARQ retransmission schemes.

\appendices
\section{pathloss Estimation}\label{pathlossprop}
The pathloss estimation has three phases, explained as follows.
\begin{itemize}
    \item Transmission Phase:
    In this phase, $D_T$ transmits $m$ number of symbols on all the candidate communication links ($D_T \to D_R$, $D_T\to BS_{_{mC}}$, and $D_T\to BS_{_{MC}}$) using fixed transmission power $P_T$. The signal received at the respective receiver ($D_R$, $BS_{_{mC}}$, and $BS_{_{MC}}$) can be calculated as follows:
    \begin{equation}
        \begin{split}
            y_{_{D_R}} &= \sqrt{P_T}\: L_d\:Z_d \:x + n_{_{d}}\\
            y_{_{mC}} &= \sqrt{P_T}\: L_{_{mC}}\:Z^{^{mC}}_{ul} \:x + n_{_{mBS}}\\
            y_{_{MC}} &= \sqrt{P_T}\: L_{_{MC}}\:Z^{^{MC}}_{ul} \:x + n_{_{MBS}},
        \end{split}
    \end{equation}

where $L_d (Z_d)$, $L_{_{mC}} (Z_{_{mC}})$, and $L_{_{MC}} (Z_{_{MC}})$ are the pathlosses (channel coefficients) between $D_T \to D_R$, $D_T \to BS_{_{mC}}$, and $D_T \to BS_{_{MC}}$, respectively, as shown in Fig. \ref{preposition1}. $x$ is the transmitted signal and $n_{_{d}}$, $n_{_{mBS}}$, $n_{_{MBS}}$ represent the noise of the respective channel. The noise of each channel follows the zero-mean complex Gaussian distribution, therefore, $n_{_{d}} \sim \mathcal{CN}(0,\sigma_{_{d}}^2)$, $n_{_{mBS}} \sim \mathcal{CN}(0,\sigma_{_{mBS}}^2)$, and $n_{_{MBS}} \sim \mathcal{CN}(0,\sigma_{_{MBS}}^2)$. We consider that the wireless channels of all the three links follow complex Gaussian distribution with zero mean and unity variance \big($Z_d \sim \mathcal{CN}(0,1)$, $Z_{_{mC}}^{_{ul}} \sim \mathcal{CN}(0,1)$, and $Z_{_{MC}}^{^{ul}} \sim \mathcal{CN}(0,1)$\big). Therefore, the received signal at all the receiver also follows the complex Gaussian distribution; $y_{_{D_R}} \sim \mathcal{CN}(0,\sigma_{_{D_R}})$, $y_{_{mC}} \sim \mathcal{CN}(0,\sigma_{_{mC}})$, and $y_{_{MC}} \sim \mathcal{CN}(0,\sigma_{_{MC}})$. To find variance of the received signal, we assume $x\in C$ and $|x|=1$, then $\sigma_{_{D_R}} = P_T \:L_{_{d}}^2 + \sigma_{_{d}}$, $\sigma_{_{mC}} = P_T \:L_{_{mC}}^2 + \sigma_{_{mBS}}$, and $\sigma_{_{MC}} = P_T \:L_{_{MC}}^2 + \sigma_{_{MBS}}$
%where $x$ is the transmitted signal and $L_d = 1/d^{^{\alpha}}_{_{d}}$ ($Z_d$), $L_{_{mC}} = 1/d^{^{\alpha}}_{_{mC}}$ ($Z^{^{mC}}_{ul}$), and  $L_{_{MC}} = 1/d^{^{\alpha}}_{_{MC}}$ ($Z^{^{MC}}_{ul}$) are the pathloss (channel coefficients) between $D_T \to D_R$, $D_T \to BS_{_{mC}}$, and $D_T \to BS_{_{MC}}$, respectively, as shown in Fig. \ref{preposition1}. $n_{_{D_R}}$, $n_{_{mC}}$, $n_{_{MC}}$ represent the noise of the respective channel. The noise of each channel follows the zero-mean complex Gaussian distribution, therefore, $n_{_{D_R}} \sim \mathcal{N}(0,\sigma_{_{D_R}}^2)$, $n_{_{mC}} \sim \mathcal{N}(0,\sigma_{_{mC}}^2)$, and $n_{_{MC}} \sim \mathcal{N}(0,\sigma_{_{MC}}^2)$.
\begin{figure}[ht]
\begin{center}
	\includegraphics[width=2.5in]{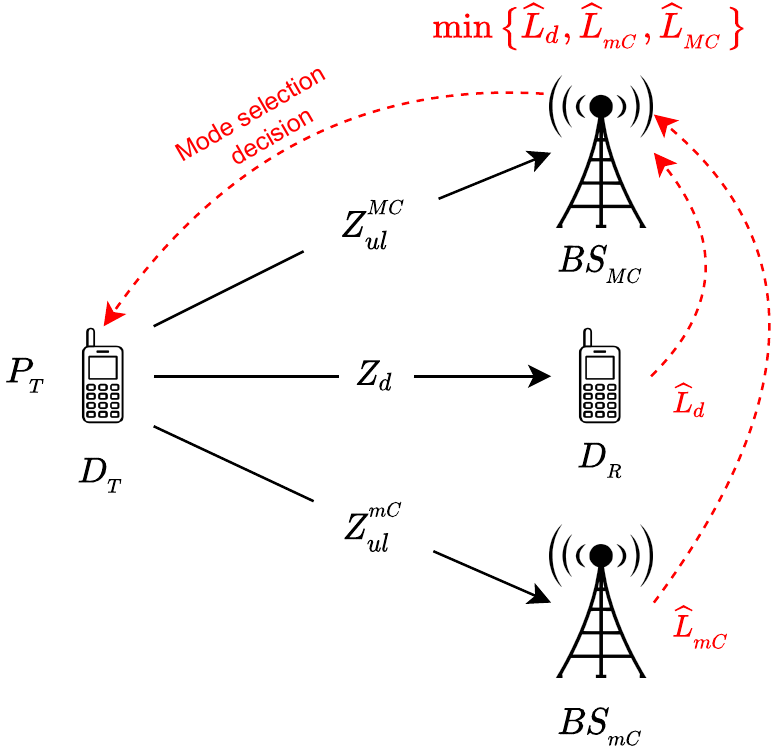}
\caption{Pathloss estimation by transmission; solid black arrows represent uplink data signaling, red dotted arrows represent uplink and downlink control signaling.}
\label{preposition1}
\end{center}
\end{figure}
\item Pathloss Estimation Phase: In this phase, every receiver estimates the pathloss of the respective communication link and then conveys it to $BS_{_{MC}}$ on the uplink control channel, which then performs the mode selection mechanism. The noisy measurement of pathloss at $D_R$, $BS_{_{mC}}$, and $BS_{_{MC}}$ can be calculated as follows:
\begin{equation}
\begin{split}
    \widehat{L}_d = \frac{\widehat{P}_{_{R,D_R}}}{P_{_{T}}}, \ &\text{where} \ \widehat{P}_{_{R,D_R}} = \frac{\sum_{i=1}^{m}|y_{_{D_R}}(i)|^2}{m}\\
    \widehat{L}_{_{mC}} = \frac{\widehat{P}_{_{R,mC}}}{P_{_{T}}}, \ &\text{where} \ \widehat{P}_{_{R,mC}} = \frac{\sum_{i=1}^{m}|y_{_{mC}}(i)|^2}{m}\\
    \widehat{L}_{_{MC}} = \frac{\widehat{P}_{_{R,MC}}}{P_{_{T}}}, \ &\text{where} \ \widehat{P}_{_{R,MC}} = \frac{\sum_{i=1}^{m}|y_{_{MC}}(i)|^2}{m}.
\end{split}
\end{equation}
$\widehat{P}_{_{R,D_R}}$, $\widehat{P}_{_{R,mC}}$, and $\widehat{P}_{_{R,MC}}$ represent the estimated values of received power at $D_R$, $BS_{_{mC}}$, and $BS_{_{MC}}$, respectively. We know that $|y_{_{D_R}}|$, $|y_{_{mC}}|$, and $|y_{_{MC}}|$ follow Rayleigh distributions and $|y_{_{D_R}}|^2$, $|y_{_{mC}}|^2$, and $|y_{_{MC}}|^2$ follow exponential distributions. Therefore, by invoking the Central Limit Theorem and for large $m$, $\widehat{P}_{_{R,D_R}}$, $\widehat{P}_{_{R,mC}}$, and $\widehat{P}_{_{R,MC}}$ follow Gaussian distributions. Similarly, $\widehat{L}_d$, $\widehat{L}_{_{mC}}$, and $\widehat{L}_{_{MC}}$ also follow Gaussian distributions.

%From the central limit theorem, it follows:
%\begin{equation}
%    \begin{split}
%    \widehat{P}_{_{R,D_R}} \sim \mathcal{N}\big(\frac{m}{\kappa_d},\frac{m}{\kappa_d^2}\big), \ &\text{thus} \ \widehat{L}_d \sim \mathcal{N}\big(\frac{m}{\kappa_d P_{_{T}}},\frac{m}{\kappa_d^2 P_{_{T}}^2}\big)\\
 %    \widehat{P}_{_{R,mC}} \sim \mathcal{N}\big(\frac{m}{\kappa_{_{mC}}},\frac{m}{\kappa_{_{mC}}^2}\big), \ &\text{thus} \ \widehat{L}_{_{mC}} \sim \mathcal{N}\big(\frac{m}{\kappa_{_{mC}} P_{_{T}}},\frac{m}{\kappa_{_{mC}}^2 P_{_{T}}^2}\big)\\
  %    \widehat{P}_{_{R,MC}} \sim \mathcal{N}\big(\frac{m}{\kappa_{_{MC}}},\frac{m}{\kappa_{_{MC}}^2}\big), \ &\text{thus} \ \widehat{L}_{_{MC}} \sim \mathcal{N}\big(\frac{m}{\kappa_{_{MC}} P_{_{T}}},\frac{m}{\kappa_{_{MC}}^2 P_{_{T}}^2}\big)
   % \end{split}
%\end{equation}
%where $\kappa_d = m^2\big/2(P_{_{T}}\sigma_{_{D_R}}^2)^2$, $\kappa_{_{mC}} = m^2\big/2(P_{_{T}}\sigma_{_{mC}}^2)^2$, and $\kappa_{_{MC}} = m^2\big/2(P_{_{T}}\sigma_{_{MC}}^2)^2$ are the exponential distribution parameters.

\item Mode Selection Phase: In this phase, $BS_{_{MC}}$ obtains $\widehat{L}_d$, $\widehat{L}_{_{mC}}$, and $\widehat{L}_{_{MC}}$ for all the three candidate links via the uplink control channel. Then, it computes $\min \big\{ \widehat{L}_d, \widehat{L}_{_{mC}}, \widehat{L}_{_{MC}} \big\}$ and announces the active link via the downlink control channel to $D_T$.
\end{itemize}

\section{Proof of Proposition 1}\label{prop1}
The mC-D2D link is a two-hop wireless link consisting of an uplink and a downlink channel. Therefore, the end-to-end channel capacity of the mC-D2D link is capped by the minimum of the uplink and the downlink channel capacities \cite{farhadi2009ergodic}. It can be written as,
\begin{equation}\label{appe1}
{C^{o}_{_{mC}}}(k) = \min \{C^{^{mC}}_{_{ul}}(k),C^{^{mC}}_{_{dl}}(k)\}
\end{equation}
where $C^{^{mC}}_{_{ul}}(k)$ and $C^{^{mC}}_{_{dl}}(k)$ represent the instantaneous channel capacities of the uplink and the downlink channels of the mC-D2D mode, respectively. We assume that $BS_{_{mC}}$ operates in full-duplex mode. To cancel the self-interference caused by the simultaneous transmission and reception at $BS_{_{mC}}$, the BS utilizes the digital and analog SI cancellation techniques (see Section II-B). However, we note that in practical full-duplex systems, it is almost impossible to perfectly cancel out the effects of SI. Therefore, we incorporate residual SI as a factor of noise at the BS. Due to this, the instantaneous channel capacity of the uplink becomes,
\begin{equation}\label{appe2}
C^{^{mC}}_{_{ul}}(k) = B\log_2 \bigg(1+\frac{\bar{P} Z^{^{mC}}_{_{ul}}(k)}{L^{^{mC}}_{_{ul}}(k) N_0 + \alpha\bar{P}_{_{mC}}^{\beta}}\bigg).
\end{equation}
Where $\bar{P}_{_{mC}}$ represents the average transmit power of $BS_{_{mC}}$. $Z^{^{mC}}_{_{ul}}(k)$ and $L^{^{mC}}_{_{ul}}(k)$ represent the channel coefficients and pathloss between $D_T$ and $BS_{_{mC}}$. $\alpha\bar{P}_{_{mC}}^{\beta}$ represent residual SI, where $\alpha$ and $\beta (0\leq \beta \leq 1)$ are the constants that reflect the quality of the SI cancellation techniques employed at $BS_{_{mC}}$. By simplifying the denumerator of \eqref{appe2}, we can find the SI-to-noise-ratio for full-duplex relaying at $BS_{_{mC}}$, which is $\bar{\alpha}\bar{P}_{_{mC}}^{\beta}$, where $\bar{\alpha}=\alpha/L^{^{mC}}_{_{ul}}(k)N_0$. Next, to find the instantaneous channel capacity of the downlink of mC-D2D mode, we assume that the receiver node operates in half-duplex mode; thus, it does not experience SI. Therefore, the instantaneous channel capacity of the downlink becomes,
\begin{equation}\label{appe3}
    C^{^{mC}}_{_{dl}}(k)= B\log_2 \bigg( 1+\frac{\bar{P}_{_{mC}} Z^{^{mC}}_{_{dl}}(k)}{L^{^{mC}}_{_{dl}}(k) N_0} \bigg).
\end{equation}
Where $Z^{^{mC}}_{_{dl}}(k)$ and $L^{^{mC}}_{_{dl}}(k)$ are the channel coefficients and the pathloss between $BS_{_{mC}}$ and $D_R$, respectively. Now, by substituting \eqref{appe2} and \eqref{appe3} in \eqref{appe1}, and after some simplification steps, the end-to-end instantaneous channel capacity of mC-D2D link becomes,
\begin{equation}\label{appe4}
C^{o}_{_{mC}}(k) = \min\bigg\{ B \log_2 \big(1+ \gamma^{^{mC}}_{_{ul}}(k)\big), B \log_2 \big(1+\gamma^{^{mC}}_{_{dl}}(k)\big) \bigg\}.
\end{equation}
Where $\gamma^{^{mC}}_{_{ul}}(k) = \bar{P} Z^{^{mC}}_{_{ul}}(k)\big/1+\bar{\alpha}\bar{P}_{_{mC}}^{\beta}$ and $\gamma^{^{mC}}_{_{ul}}(k) = \bar{P}_{_{mC}} Z^{^{mC}}_{_{dl}}(k)\big/L^{^{mC}}_{_{dl}}(k) N_0$ are the SNRs of the uplink and the downlink channels, respectively. Since we are using Shannon channel capacity where the only variable that affects the channel capacity is the SNR of the transmission channel, the net-SNR of the mC-D2D link will be the minimum of uplink and downlink channels SNR. Due to this, \eqref{appe4} becomes,
\begin{equation}
C^{o}_{_{mC}}(k) = B \log_2 \big(1+\gamma_{_{mC}}(k)\big).
\end{equation}
Where $\gamma_{_{mC}}(k) = \min \big\{ \gamma^{^{mC}}_{_{ul}}(k), \gamma^{^{mC}}_{_{dl}}(k)\big\}$ is the net-SNR of mC-D2D link.

\section{Proof of Lemma 1}\label{prop2}
The block-companion matrix for queue model $n_1$ can be derived from \eqref{B_special}, which becomes,
\begin{equation}\label{le1}
\mathbf{B}_{n_1}=
\begin{bmatrix}
  b_{1,n_1} & b_{2,n_1}\\
 1 & 0
\end{bmatrix}.
\end{equation}
By solving \eqref{le1}, the largest positive root comes out to be,
\begin{equation}\label{le2}
  \lambda_{n_1}+ = \frac{1}{2}\bigg(b_{1,n_1}+\sqrt{(b_{1,n_1})^2+4(b_{2,n_1})}\bigg).
\end{equation}
To solve \eqref{le2}, we have to find $b_{1,n_1}$ and $b_{2,n_1}$. From \eqref{eq:b_k_nj}, $b_{1,n_1}$ becomes,
\begin{equation}\label{le3}
    b_{1,n_1} = \mathbf{q}_{_{1}}\mathbf{\Phi}(-\theta)\mathbf{p}_{u,i}^{\intercal}.
\end{equation}
Note that, in our proposed system, the first transmit attempt uses underlay settings. Therefore, to find $\mathbf{q}_{_{1}}$, one has to use $\Gamma_d(k)$, $\Gamma_{_{mC}}(k)$, and $\Gamma_{_{MC}}(k)$ in (\ref{decoding_error_d}) to find $\E_z[\zeta^{^{d}}_{u,1}]$, $\E_z[\zeta^{^{mC}}_{u,1}]$, and $\E_z[\zeta^{^{MC}}_{u,1}]$, respectively. Due to this fact, $\mathbf{q}_{_{1}}$ becomes $ \big[1-\E_z[\zeta^{^{d}}_{u,1}], 1, 1-\E_z[\zeta^{^{mC}}_{u,1}], 1, 1-\E_z[\zeta^{^{MC}}_{u,1}], 1\big]$. Now, by substituting $\mathbf{q}_{_{1}}$, $\mathbf{\Phi}(-\theta) = \text{diag}[e^{-lr\theta}, 1, e^{-lr\theta}, 1, e^{-lr\theta}, 1]$, and $\mathbf{p}_{u,i}=[p_{u,1}, p_{u,2}, p_{u,3}, p_{u,4}, p_{u,5}, p_{u,6}]$ in \eqref{le3}, and after some simplification steps, $b_{1,n_1}$ becomes,
\begin{equation}\label{b_1_n1}
\begin{split}
  b_{1,n_1} =& (1-\E_z[\zeta^{^{d}}_{u,1}])e^{-lr\theta}p_{u,1} + p_{u,2}+(1-\E_z[\zeta^{^{mC}}_{u,1}])e^{-lr\theta}p_{u,3}\\
  &+ p_{u,4}+(1-\E_z[\zeta^{^{MC}}_{u,1}])e^{-lr\theta}p_{u,5}+p_{u,6}\\
  &=e^{-lr\theta}\bigg[p_{u,1}\big(\alpha_{_{d}}\big)+p_{u,3}\big(\alpha_{_{mC}}\big)+p_{u,5}\big(\alpha_{_{MC}}\big)\bigg]+p^{\text{off}}_{u}.
  \end{split}
\end{equation}
Where $\alpha_d = 1-\E_z[\zeta^{^{d}}_{u,1}]$; $\alpha_{_{mC}} = 1-\E_z[\zeta^{^{mC}}_{u,1}]$; $\alpha_{_{MC}} = 1-\E_z[\zeta^{^{MC}}_{u,1}]$; and $p^{\text{off}}_{u} = p_{u,2}+p_{u,4}+p_{u,6}$, which is the sum of probabilities in OFF states for the underlay scenario.

Similarly, from \eqref{eq:b_k_nj}, $b_{2,n_1}$ becomes,
\begin{equation}\label{le4}
    b_{2,n_1} = \mathbf{q}_{_{3}}\mathbf{\Phi}(-\theta)\mathbf{p}_{o,i}^{\intercal}+\varepsilon_{_{ac}}.
\end{equation}
Note that, for the second transmit attempt, the transmit D2D node uses overlay settings for packet transmission. Therefore, to find $\mathbf{q}_{_{3}}$, $\varepsilon_d$, $\varepsilon_{_{mC}}$, and $\varepsilon_{_{MC}}$, one should use $\gamma_d(k)$, $\gamma_{_{mC}}(k)$, and $\gamma_{_{MC}}(k)$ in (\ref{decoding_error_d}). Due to this fact, $\mathbf{q}_{_{3}}$ becomes $\big[\E_z[\zeta^{^{d}}_{o,1}]-\varepsilon_d, 1, \E_z[\zeta^{^{mC}}_{o,1}]-\varepsilon_{_{mC}}, 1, \E_z[\zeta^{^{MC}}_{o,1}]-\varepsilon_{_{MC}}, 1\big]$. Now, by substituting $\mathbf{q}_{_{3}}$, $\mathbf{\Phi}(-\theta)$, and $\mathbf{p}_{o,i}=[p_{o,1}, p_{o,2}, p_{o,3}, p_{o,4}, p_{o,5}, p_{o,6}]$ in \eqref{le4}, and after some simplification steps, $b_{2,n_1}$ becomes,
\begin{equation}\label{b_2_n1}
  \begin{split}
     b_{2,n_1} =&(\E_z[\zeta^{^{d}}_{o,1}]- \varepsilon_d)e^{-lr\theta}p_{o,1} + p_{o,2}\\
     &+ (\E_z[\zeta^{^{mC}}_{o,1}]- \varepsilon_{_{mC}})e^{-lr\theta}p_{o,3} + p_{o,4}\\
      & + (\E_z[\zeta^{^{MC}}_{o,1}]- \varepsilon_{_{MC}})e^{-lr\theta}p_{o,5} + p_{o,6} + \varepsilon_{_{ac}}\\
       &= e^{-lr\theta}\bigg[p_{o,1}\big(\beta_d\big)+p_{o,3}\big(\beta_{_{mC}}\big)+p_{o,5}\big(\beta_{_{MC}}\big)\bigg]+ p^{\text{off}}_{o} + \varepsilon_{_{ac}}.
  \end{split}
\end{equation}
Where $\beta_d = \E_z[\zeta^{^{d}}_{o,1}]-\varepsilon_d$; $\beta_{_{mC}} = \E_z[\zeta^{^{mC}}_{o,1}]-\varepsilon_{_{mC}}$; $\beta_{_{MC}} = \E_z[\zeta^{^{MC}}_{o,1}]-\varepsilon_{_{MC}}$; and $p^{\text{off}}_{o} = p_{o,2}+p_{o,4}+p_{o,6}$, which is the sum of probabilities in OFF states for the overlay scenario. One can find $\varepsilon_{_{ac}}$ by calculating $\varepsilon_d$, $\varepsilon_{_{mC}}$, and $\varepsilon_{_{MC}}$ by substituting $m=2$ into (\ref{decoding_error_d}a), (\ref{decoding_error_d}b), and (\ref{decoding_error_d}c), respectively.

Now, to find $\lambda_{n_1}+$, we substitute results from (\ref{b_1_n1}) and (\ref{b_2_n1}) into (\ref{le2}). After some simplification steps, the final expression for $\lambda_{n_1}+$ becomes,
\begin{equation}\label{lambda_n1_final}
\begin{split}
  \lambda_{n_1}+ &= \frac{1}{2}\biggl(\big(e^{-lr\theta}\big[\varphi\big]+p^{\text{off}}_{u}\big)+\\
  &\sqrt{\big(e^{-lr\theta}\big[\varphi\big]+p^{\text{off}}_{u}\big)^2+4\big(e^{-lr\theta}\big[\vartheta\big]+p^{\text{off}}_{o}+\varepsilon_{_{ac}}\big)}\biggl),
\end{split}
\end{equation}
where $\varphi = p_{u,1}\big(\alpha_d\big)+p_{u,3}\big(\alpha_{_{mC}}\big)+p_{u,5}\big(\alpha_{_{MC}}\big)$ and $\vartheta = p_{o,1}\big(\beta_d\big)+p_{o,3}\big(\beta_{_{mC}}\big)+p_{o,5}\big(\beta_{_{MC}}\big)$.

\section{Proof of Lemma 2}\label{prop3}
The block-companion matrix for queue model $n_2$ can be derived from \eqref{B_special}, which becomes,
\begin{equation}\label{le21}
\mathbf{B}_{n_2}=
\begin{bmatrix}
  b_{1,n_2} & b_{2,n_2}\\
 1 & 0
\end{bmatrix}.
\end{equation}
We note that the first transmit attempt in both of the queue models uses underlay settings. Moreover, both queue models respond the same when they receive acknowledgment (either positive or negative) of the first transmit attempt, as shown in Fig. \ref{flow_chart}. Therefore, $b_{1,n_2} = b_{1,n_1}$. The expression for the second transmit attempt $b_{2,n_2}$ can be derived from \eqref{eq:b_k_nj}, which becomes
\begin{equation}\label{le222}
    b_{2,n_2} = \mathbf{q}_{_{4}}\mathbf{\Phi}(-\theta)\mathbf{p}_{o,i}^{\intercal}.
\end{equation}
Similar to $n_1$ queue model, the second transmit attempt in $n_2$ model also uses overlay settings for packet transmission. Therefore, to find $\mathbf{q}_{_{4}}$, one has to use $\gamma_d(k)$, $\gamma_{_{mC}}(k)$, and $\gamma_{_{MC}}(k)$ in (\ref{decoding_error_d}a), (\ref{decoding_error_d}b), and (\ref{decoding_error_d}c), respectively. Due to this fact, $\mathbf{q}_{_{4}}$ becomes $\big[\E_z[\zeta^{^{d}}_{o,1}], 1, \E_z[\zeta^{^{mC}}_{o,1}], 1, \E_z[\zeta^{^{MC}}_{o,1}], 1\big]$. Now, by substituting $\mathbf{q}_{_{4}}$,  $\mathbf{\Phi}(-\theta)$, and $\mathbf{p}_{o,i}$ in \eqref{le222}, and after some simplification steps, $b_{2,n_2}$ becomes,
\begin{equation}\label{b_2_n2}
\begin{split}
  b_{2,n_2}&=e^{-lr\theta}\E_z[\zeta^{^{d}}_{o,1}]p_{o,1} + p_{o,2} + e^{-lr\theta}\E_z[\zeta^{^{mC}}_{o,1}]p_{o,3}+p_{o,4}\\
  &+ e^{-lr\theta}\E_z[\zeta^{^{MC}}_{o,1}]p_{o,5}+p_{o,6}\\
  &=e^{-lr\theta}\big(p_{o,1}\E_z[\zeta^{^{d}}_{o,1}]+p_{o,3}\E_z[\zeta^{^{mC}}_{o,1}]+p_{o,5}\E_z[\zeta^{^{MC}}_{o,1}]\big)+p^{\text{off}}_{o}.
  \end{split}
\end{equation}
Now, to find the largest positive root for the case of $n_2$ ($\lambda_{n_2}+$), we substitute $b_{1,n_2}$ and $b_{2,n_2}$ into (\ref{le2}), and after some simplification steps, the final expression becomes,
\begin{equation}\label{lambda_n2_final}
\begin{split}
 \lambda_{n_2}+ = \frac{1}{2}\bigg(\big(e^{-lr\theta}&\big[\varphi\big]+p^{\text{off}}_{u}\big) +\\
  &\sqrt{\big(e^{-lr\theta}\big[\varphi\big]+p^{\text{off}}_{u}\big)^2 +4e^{-lr\theta}\big[\varrho\big]+p^{\text{off}}_{o}}\bigg)
\end{split}
\end{equation}
where $\varrho = p_{o,1}\E_z[\zeta^{^{d}}_{o,1}]+p_{o,3}\E_z[\zeta^{^{mC}}_{o,1}]+p_{o,5}\E_z[\zeta^{^{MC}}_{o,1}]$.

\footnotesize{
\bibliographystyle{IEEEtran}
\bibliography{references}
}

\vfill\break

\end{document}